\newtheorem{theorem}{Theorem}
\newtheorem{lemma}{Lemma}
\newtheorem{definition}{Definition}
\newcommand{\cT}{\mathcal{T}}
\newcommand{\cF}{\mathcal{F}}
\newcommand{\cS}{\mathcal{S}}
\newcommand{\ve}{\varepsilon}
\newcommand{\CNOT}{\mathrm{CNOT}}
\DeclareMathOperator{\diag}{diag}
\DeclareMathOperator{\argmin}{argmin}
\newtcolorbox[auto counter]{mybox}[2][]{
	enhanced,
	breakable,
	colback=blue!5!white,
	colframe=blue!75!black,
	fonttitle=\bfseries,
	title=Box \thetcbcounter: #2,#1
}
\begin{document}

\title{Advantage Distillation for Quantum Key Distribution}

\author{Zhenyu Du}
\author{Guoding Liu}
\author{Xingjian Zhang}
\author{Xiongfeng Ma}
\email{xma@tsinghua.edu.cn}
\affiliation{Center for Quantum Information, Institute for Interdisciplinary Information Sciences, Tsinghua University, Beijing, 100084 China}

\begin{abstract}
Quantum key distribution promises information-theoretically secure communication, with data post-processing playing a vital role in extracting secure keys from raw data. While hardware advancements have significantly improved practical implementations, optimizing post-processing techniques offers a cost-effective avenue to enhance performance. Advantage distillation, which extends beyond standard information reconciliation and privacy amplification, has proven instrumental in various post-processing methods. However, the optimal post-processing remains an open question. Therefore, it is important to develop a comprehensive framework to encapsulate and enhance these existing methods. In this work, we propose an advantage distillation framework for quantum key distribution, generalizing and unifying existing key distillation protocols. Inspired by entanglement distillation, our framework not only integrates current techniques but also improves upon them. Notably, by employing classical linear codes, we achieve higher key rates, particularly in scenarios where one-time pad encryption is not used for post-processing. Our approach provides insights into existing protocols and offers a systematic way for further enhancements in quantum key distribution.
\end{abstract}

\maketitle

\section{Introduction}
Quantum key distribution (QKD) is the most practical field in quantum information science, with seminal protocols like BB84 \cite{bennett1984quantum} and E91 \cite{Ekert_1991_quantum} laying its foundation. QKD enables the establishment of secret keys between remote parties, conventionally referred to as Alice and Bob, with the assurance of information-theoretic security rooted in the principles of quantum mechanics \cite{Lo1999Unconditional, Shor_simple_2000}. While real-world imperfections in state preparation, transmission, and measurement pose inherent challenges, the development of ingenious techniques like the decoy-state method \cite{hwang2003decoy,Lo2005decoy,wang2005decoy} and cutting-edge measurement-device-independent protocols \cite{Lo_measurement_2012, Lucamarini_2018, Ma_phase_2018, Zeng_mode_2022} has significantly pushed the limits of practical implementations \cite{RevModPhys.92.025002}. In parallel, hardware advancements have driven the evolution of QKD systems, exemplified by satellite-based QKD, which harnesses the unique benefits of space-based platforms to overcome terrestrial limitations in establishing secure long-distance communication links \cite{Liao_2017, Liao2018Satellite}. Additionally, the integration of QKD with quantum repeaters and other quantum communication technologies within quantum networks has paved the way for secure communication among multiple parties over extended distances \cite{Peev_2009, Sasaki_2011, Mohsen2012network, chen2021integrated, Chen2021Implementation, Zhu_field_2024}. These remarkable strides collectively signify a paradigm shift towards achieving quantum-secured communication on a global scale, promising unprecedented security levels for sensitive information exchange.

The QKD process can be dissected into two distinct phases. In the quantum phase, quantum states are prepared and transmitted through channels, often assumed to be susceptible to adversarial control by Eve. Following potential tampering by Eve, these states are measured by detection devices, yielding raw data. Subsequently, the classical post-processing phase commences, where Alice and Bob execute specific procedures to distill secure key bits from this raw data. This involves data sifting, which is dictated by the chosen QKD protocol, followed by preprocessing, information reconciliation, and privacy amplification. The flowchart of QKD, as illustrated in Fig.~\ref{fig:QKDflowchart}, provides a visual representation of this process.

\begin{figure}[htbp!]
	\begin{tikzpicture} [
		rounded corners,
		>=latex,
		Qbox/.style={draw=cyan,rectangle,very thick,minimum height=1.2cm,text width=2cm,align=center,text=black},
		Cbox/.style={draw=orange,rectangle,very thick,minimum height=1.2cm,text width=2cm,align=center,text=black}
		]	
		\node[Qbox,text width=1.8cm] (S) {State Preparation};
		\node[Qbox,anchor=east] (T) at ($(S.west)+(-.7,0)$) {Quantum Transmission};
		\node[Qbox,anchor=east] (M) at ($(T.west)+(-.7,0)$) {Measurement};
		\node[Cbox,text width=1.2cm,anchor=west] (Si) at ($(M.west)+(0,-1.5)$) {Data Sifting};
		\node[Cbox,anchor=west,fill=yellow, fill opacity=0.2,text opacity=1,text=blue] (P) at ($(Si.east)+(.7,0)$) {Preprocessing};
		\node[Cbox,text width=2.1cm,anchor=west] (I) at ($(P.east)+(.7,0)$) {Information Reconciliation};
		\node[Cbox,anchor=west] (Pa) at ($(I.east)+(.7,0)$) {Privacy Amplification};
		\node[text width=3cm, anchor=west] (QP) at ($(S.east) + (+4,0)$) {Quantum Phase};
		\node[text width=3cm, anchor=west] (CP) at ($(QP.west)+ (0, -1.5)$) {Classical Data Post-processing};
		
		\draw[->,very thick,cyan] (S.west) -- (T.east);
		\draw[->,very thick,cyan] (T.west) -- (M.east);
		\draw[->,very thick,red] (M.west) to [bend right=60]  (Si.west);
		\draw[->,very thick,orange] (Si.east) -- (P.west);
		\draw[->,very thick,orange] (P.east) -- (I.west);
		\draw[->,very thick,orange] (I.east) -- (Pa.west);
	\end{tikzpicture}
	\caption{Flowchart for QKD. The process consists of two phases. Top: The quantum phase involves state preparation, transmission through a potentially adversarial channel, and measurement. Bottom: Classical data post-processing encompasses data sifting, preprocessing (the focus of this work), information reconciliation, and privacy amplification. Parameter estimation, utilizing data statistics, is also crucial. For simplicity, procedures like authentication and error verification \cite{Fung2010Finite} are omitted.}
	\label{fig:QKDflowchart}
\end{figure}

While hardware advancements undoubtedly contribute to improving QKD performance \cite{RevModPhys.92.025002}, the development of refined post-processing schemes presents a more economically viable path. Standard post-processing methods typically encompass data sifting, information reconciliation, and privacy amplification. Initially, raw keys are obtained after data sifting, containing both bit and phase errors. The goals of information reconciliation and privacy amplification are to correct bit and phase errors separately, utilizing standard error correction method \cite{Shannon_mathematical_1948}. Shor and Preskill \cite{Shor_simple_2000} derive the key formula for the BB84 protocol:
\begin{equation}
	1 - h(\delta_b) - h(\delta_p)
\end{equation}
where $\delta_b$ and $\delta_p$ represent the bit and phase error rates, respectively, and $h$ is the entropy function. The information-theoretical security of these processes is grounded in entanglement distillation \cite{Lo1999Unconditional}, complementary approaches \cite{koashi2005simple}, or entropic approaches \cite{Devetak_2005, Renner_security_2008}. Among these, the security proof based on entanglement distillation stands out due to its broader applicability, extending beyond QKD to encompass entanglement distillation and quantum networks. Initially conceived for distilling perfect Einstein-Podolsky-Rosen (EPR) pairs between Alice and Bob, this proof was later ingeniously extended to the prepare-and-measure protocol \cite{Shor_simple_2000}. In the practical prepared-and-measure protocols, one-time pad (OTP) encryption is usually applied \cite{koashi2005simple, Huang_stream_2022}. Nonetheless, the role of OTP in classical communication warrants further investigation. 

While data sifting, information reconciliation, and privacy amplification are often standardized, the preprocessing step is where various post-processing schemes strive to enhance QKD performance, thus becoming a focal point for innovation. Preprocessing seeks effective methods to further select keys and alter the bit and phase errors, aiming to improve key rate and error thresholds. A particularly effective method, classical advantage distillation, leverages two-way communication to establish secret keys in scenarios where one-way communication proves insufficient \cite{Maurer_1993_secret}. Gottesman and Lo \cite{Gottesman_proof_2003} harnessed two-way communication to enhance the performance of QKD protocols. Their innovative introduction of the ``B step" as a preprocessing operation on raw keys, applied prior to standard information reconciliation and privacy amplification, paved the way for further advancements. Building upon this, Chau's introduction of adaptive privacy amplification \cite{Chau_2002_practical} further pushed the boundaries, achieving the current state-of-the-art results in tolerable error rates. 

In addition to two-way communication techniques, a novel protocol has emerged that involves Alice deliberately adding noise to her raw keys \cite{PhysRevLett.95.080501}. This strategy, aimed at reducing Eve's knowledge of the keys, is also considered a form of preprocessing. It can be implemented using one-way classical communication, illustrating that innovative one-way schemes can significantly enhance the performance of QKD. 

Although various methods have been proposed, achieving optimal post-processing remains an important open question. Given the considerable flexibility inherent in preprocessing techniques, the development of a comprehensive framework that both unifies and enhances these diverse methods becomes imperative. Such a framework would not only deepen our understanding of existing protocols but also serve as a springboard for refining them, ultimately boosting the performance of QKD protocols.

In this paper, we propose an advantage distillation framework for QKD, demonstrating its capacity to seamlessly integrate previous protocols, such as the B step and adding noise, while exhibiting remarkable generality. This integration lays the groundwork for a fundamental understanding that can be harnessed to elevate key rates and error thresholds across a spectrum of preprocessing schemes. To showcase the potency of our framework, we apply it in conjunction with classical linear codes. We rigorously derive key rate formulas for various scenarios, encompassing adding noise and schemes both with and without OTP encryption. Our findings underscore the enhanced efficiency of QKD protocols that forgo OTP encryption compared to those conventionally employing it.  Moreover, we apply our framework to $[n\ 1\ n]$ codes, deriving a tolerable error rate that aligns with the best-known results in the literature \cite{Chau_2002_practical}. Notably, while our framework seamlessly integrates previous protocols \cite{Watanabe2007keyrate}, we achieve an improvement in key rate through the systematic utilization of $[n\ 1\ n]$ codes and $[m\ m-1\ 2]$ codes. These insights collectively offer a methodical approach to refining current QKD protocols. 

The remainder of this paper is organized as follows. In Sec.~\ref{sec:prelim}, we review the basic concepts of error correction and the security proof based on entanglement distillation. In Sec.~\ref{sec:qkd_adv_distill_framework}, we introduce our QKD advantage distillation framework and apply it to different protocols, including those based on classical linear codes. In Sec.~\ref{sec:qkd_adv_distill_key_rate}, we analyze the role of error-correcting codes in advantage distillation and derive key rate formulas under different scenarios. Our analysis shows that omitting OTP encryption can yield a higher key rate.  We also derive the key rate formula for incorporating the adding noise method into our framework. In Sec.~\ref{sec:application}, we employ our framework to derive the error tolerance threshold that aligns with the best-known result and improves the key rate of previous works. In Sec.~\ref{sec:conclusion}, we provide the conclusion and outlook.

\section{Preliminaries}\label{sec:prelim}
In this section, we introduce the basic concepts and results of error correction and how they fit into proving the security of QKD protocols under the entanglement distillation picture. 

\subsection{Classical linear code}
We introduce classical linear code \cite{Nielsen_quantum_2002}. A linear code $C$ encoding $k$-bit message into an $n$-bit code space is specified by the $n \times k$ \textit{generator matrix} $G$ whose elements belong to \{0,1\}. The matrix $G$ maps messages to their encoded equivalent. Thus, the $k$-bit message $x$ is encoded as $Gx$. The multiplication and plus operation in linear code is modulo 2. The set of codewords generated by $G$ is also denoted as $C$, in which
\begin{equation}\label{eq:code_space}
	C = \{Gx : x\in \{0,1\}^k\}.
\end{equation} 

Denote $|x|$ as the Hamming weight of a codeword $x$, i.e., the number of nonzero entries of $x$. The Hamming distance between two codewords $x,y$ as $d(x,y) \equiv |x + y|$. The distance $d$ of a linear code $C$ can be defined as 
\begin{equation}
	d \equiv \min_{x,y \in C, x\neq y} d(x,y) = \min_{x\in C, x\neq 0} |x|.
\end{equation}
We denoted $C$ as $[n\ k\ d]$ code.

Another important matrix is the $(n-k) \times n$ \textit{parity check matrix} $H$ satisfying $HG = 0$. From Eq.~\eqref{eq:code_space}, we have
\begin{equation}
	Hx = 0, \forall x \in C.
\end{equation}
If $Hx \neq 0$, then some errors must cause the encoded message to no longer belong to $C$.

We introduce some definitions that are used substantially in this work.  The \textit{tag} of an $n$-bit string $x$ is defined as $t = Hx$. An \textit{error pattern} is a bit string $e$ of length $n$. The tag of an error pattern $e$ is defined as \textit{error syndrome} $s = He$, which has length $n-k$. There are $2^{n-k}$ different syndrome $\{s^0,s^1,\cdots,s^{2^{n-k}-1}\}$, and for each syndrome $s^j$, there are $2^k$ error pattern $\{e^j_0,e^j_1,\cdots,e^j_{2^k-1}\}$ satisfying $He^j_i = s^j$. We will use error syndromes to classify error patterns.

For a linear code $C$ with parity check matrix $H$ and generator matrix $G$, its \textit{dual code} $C^{\perp}$ is defined with generator matrix $H^T$ and parity check matrix $G^T$. A code and its dual code are useful in the security proof based on entanglement distillation, where the parity check matrix $H$ of code $C$ is used to classify bit error patterns and the parity check matrix $G^T$ of code $C^{\perp}$ is used to classify phase error patterns. For more discussions, see Section~\ref{sec:one-way-sec} and Section~\ref{sec:qkd_adv_distill_key_rate}.

Consider the $[7\ 4\ 3]$ code as an illustrative example, with the parity check matrix $H$ given by:
\begin{equation}
    H = \begin{pmatrix}
        1 & 0 & 0 & 1 & 1 & 0 & 1 \\
        0 & 1 & 0 & 1 & 0 & 1 & 1 \\
        0 & 0 & 1 & 0 & 1 & 1 & 1
    \end{pmatrix}.
\end{equation}

The quantum circuit for extracting the tag $Hx$ from a bit string $x$ is provided in Fig.~\ref{fig:linear_code}. Assuming the initial state of the data qubits is $\ket{x}$ in the computational basis, the tag can be extracted using CNOT gates from data qubits to ancillary qubits. 

\begin{figure}[!h]
\begin{quantikz}[row sep=0.2cm, column sep=0.2cm]
\lstick[7]{Data qubits in \\computational basis $\ket{x}$}& \ctrl{7} &&&&&&&&  \\
    && \ctrl{7} &&&&&&&  \\
    &&& \ctrl{7} &&&&&& \\
    &&&& \ctrl{5} &&&&& \\
    &&&&& \ctrl{5} &&&& \\
    &&&&&& \ctrl{4} &&&  \\
    &&&&&&& \ctrl{3} &&  \\
    \lstick{$\ket{0}$}&\targ{}&&& \targ{} &\targ{}&& \targ{} &\meter{} &\setwiretype{c} \rstick[3]{Tag $t=Hx$}\\
    \lstick{$\ket{0}$}&& \targ{} && \targ{} && \targ{} &\targ{}&\meter{}  &\setwiretype{c} \\
    \lstick{$\ket{0}$} &&& \targ{} && \targ{} & \targ{} & \targ{} &\meter{} &\setwiretype{c} \\
\end{quantikz}

\caption{Quantum circuit for extracting the tag $t = Hx$ from data qubits $\ket{x}$. The circuit employs the parity check matrix $H$ of the $[7\ 4\ 3]$ code. Ancillary qubits are initialized to $\ket{0}$. The operations on each ancillary qubit correspond to a row of $H$. Ancillary qubits are measured in the computational basis at the end, yielding the tag $t=Hx$ of the data $x$.}
\label{fig:linear_code}
\end{figure}
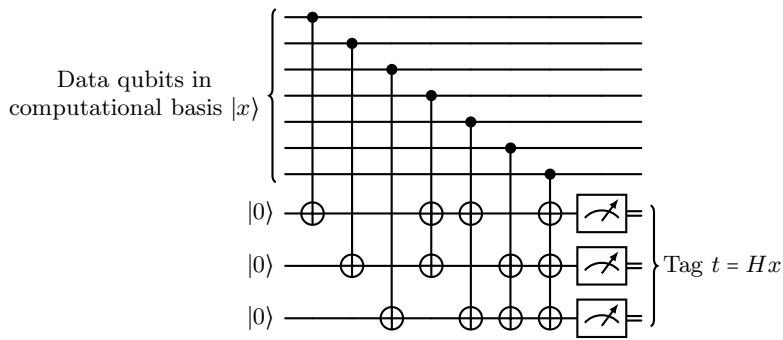

Notably, each CNOT gate has one of the data qubits as the control qubit, showcasing a feature of linear codes. This property does not hold for non-linear operations, such as the CCNOT operation depicted in Fig.~\ref{fig:CCNOT}, which might be utilized in some non-linear codes. While we do not delve into the application of non-linear codes in this work, it is worth mentioning that some of them can be integrated into our framework if they are classically replaceable operations \cite{Liu2022classically}. We will discuss the classically replaceable operation in Sec.~\ref{sec:one-way-sec}.

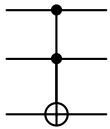
\begin{figure}[!h]
\begin{quantikz}
    & \ctrl{2} &  \\
    & \ctrl{1} &  \\
    & \targ{} & 
\end{quantikz}
\caption{The CCNOT operation, also known as the Toffoli gate, exhibits non-linearity in its treatment of the two control qubits.}
\label{fig:CCNOT}
\end{figure}

\subsection{$\varepsilon$-smallest probable set and error correction}
We discuss basic concepts about error correction. Suppose Alice and Bob hold two $n$-bit strings $x,y$, the error pattern $e$ is defined as $e = x + y$. We show how Alice and Bob can identify the error pattern $e$. The error pattern $e$ can be seen as a realization of a random variable $X$ distributed on $\{0,1\}^n$, according to a probability distribution $p_X$. To analyze the behavior of error patterns in the asymptotic limit, we first define the $\varepsilon$-smallest probable set.

\begin{definition}[$\varepsilon$-smallest probable set]
Given $0\leq \varepsilon \leq \frac{1}{2}$, a random variable $X\in \mathcal{X}$ which has a finite range $|\mathcal{X}|<\infty$ and probability distribution $p_X$, the $\varepsilon$-smallest probable set $\mathcal{T}^{\varepsilon}_X$ of $X$ is defined as the smallest set  of values such that a realization of $X$ lies in it with a failure probability no larger than $\varepsilon$,
\begin{equation}
\begin{split}
\mathcal{T}^{\varepsilon}_X = \argmin_{\mathcal{S}}|\mathcal{S}|,\\
s.t.\Pr(X\in \mathcal{S})\geq 1-\varepsilon.
\end{split}
\end{equation}
\end{definition}

If $X = (X_1,X_2,\cdots,X_n)$ and $X_1,X_2,\cdots,X_n$ are $i.i.d.$, the cardinality of $\varepsilon$-smallest probable set can be upper bounded by 
\begin{equation}\label{eq:typical_set}
	|\mathcal{T}^{\varepsilon}_X| \le 2^{n (h(X_1) + \varepsilon)}
\end{equation} 
for $n$ sufficiently large \cite{cover1999elements}. Here, the entropy of random variable $h(X_1)$ is defined as
\begin{equation}
	h(X_1) = - \sum_{x}  p(X_1=x) \log p(X_1=x).
\end{equation} 
We also define
\begin{equation}
	h(x) = - x \log x - (1-x) \log (1-x)
\end{equation}
to denote the entropy function of a real number $x$, where $0 \le x \le 1$.

To succeed in correcting the error with a failure probability $\varepsilon$, we only need to guarantee success when the error pattern is in the $\varepsilon$-smallest probable set $\mathcal{T}^{\varepsilon}_X$. To identify the error pattern, we introduce universal hash functions.

\begin{definition}[Universal hash family \cite{CARTER1979143, WEGMAN1981265}]
	A family of functions $\cF$ mapping elements $e$ in a space $\cT$ to another space $\cS$ is called a universal hash family if the probability of a randomly chosen hash function $f\in \cF$ outputting the same hashing result for any two different strings is upper bounded by
	\begin{equation}
		\forall e_i \neq e_j \in \cT, \Pr_{f\in\cF}[f(e_i) = f(e_j)] \le \frac{1}{|\cS|}.
	\end{equation}
\end{definition}

In our scenario, the elements of $\mathcal{T}$ comprise $n$-bit error patterns and $|\mathcal{S}| = 2^k$, with $k$ denoting the length of the tag. A crucial step in error pattern identification is selecting universal hashing functions, and random $k \times n$ hashing matrices can be employed for this purpose \cite{Huang_stream_2022}. 

To identify the error pattern $e$, they jointly and randomly select a hash function $f \in \mathcal{F}$. Then, they compute the tag $s_x = f(x)$ and $s_y = f(y)$ independently. Alice then transmits $s_x$ to Bob through an authenticated channel, after which Bob calculates the error syndrome $s = f(e) = f(x \oplus y) = s_x \oplus s_y$. If $e$ is the unique error pattern in $\mathcal{T}$ satisfying $f(e) = s$, then $e$ can be successfully identified. The probability of failure is bounded by 
\begin{equation}\label{eq:fail_prob}
	\sum_{e' \in \cT \backslash \{e\}} \Pr[f(e') = f(e)] \le \frac{|\mathcal{T}|}{|\mathcal{S}|}.
\end{equation} 
Hence, $e$ can be identified with a probability of $1-\frac{|\mathcal{T}|}{|\mathcal{S}|}$.

From Eq.~\eqref{eq:fail_prob}, to guarantee small failure probability $\delta$ on successfully identifying the error pattern in set $\cT$, the length of the tag should satisfy
\begin{equation}\label{eq:tag_length}
	k \ge \log |\cT| + \log \frac{1}{\delta}.
\end{equation} Combined with Eq.~\eqref{eq:typical_set}, we derive the following lemma.
\begin{lemma}[One-way error correction \cite{Shannon_mathematical_1948, cover1999elements}]\label{lem:error_space}
Suppose Alice holds $n$-bit string $x \in C$, Bob holds n-bit string $x + e$, $e$ is a realization of a random variable $X$, and $k$ is the length of the tag. To identify the error pattern $e$ with probability $1-\ve$, $\frac{k}{n}$ can be arbitrarily close to $\frac{\log |\cT^{\varepsilon}_n|}{n}$ for sufficiently large $n$. If $X=(X_1,X_2,\cdots,X_n)$ and $X_i$s are $i.i.d$, then for any $\ve > 0$, $\frac{k}{n}$ can be arbitrarily close to $h(X_1)$ for sufficiently large $n$.
\end{lemma}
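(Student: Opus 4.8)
The plan is to assemble Lemma~\ref{lem:error_space} directly from the three ingredients already laid out in the excerpt: the cardinality bound on the $\varepsilon$-smallest probable set in Eq.~\eqref{eq:typical_set}, the universal-hashing failure bound in Eq.~\eqref{eq:fail_prob}, and the resulting tag-length requirement in Eq.~\eqref{eq:tag_length}. So the proof is essentially a bookkeeping argument that strings these together and takes the appropriate limit; there is no genuinely new technical content, which is why I expect no serious obstacle — the only care needed is in the order of quantifiers and in how ``arbitrarily close'' is made precise.

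First I would fix $\varepsilon>0$ and let $\cT = \cT^\varepsilon_X$ be the $\varepsilon$-smallest probable set of $X$, so that by definition $\Pr(X\in\cT)\ge 1-\varepsilon$. Then I would run the protocol described just before the lemma: Alice and Bob jointly sample a universal hash function $f\in\cF$ with range $\cS$ of size $2^k$, Alice sends $s_x=f(x)$, Bob forms $s=s_x\oplus f(x+e)=f(e)$ and declares the unique element of $\cT$ consistent with $s$, if any. The protocol fails only if either $e\notin\cT$ (probability $\le\varepsilon$) or some other $e'\in\cT$ collides with $e$ under $f$; by Eq.~\eqref{eq:fail_prob} the latter is at most $|\cT|/2^k$. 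Hence the total failure probability is at most $\varepsilon + |\cT|/2^k$. Choosing $k$ so that $|\cT|/2^k$ is negligible — concretely $k\ge \log|\cT| + \log(1/\delta)$ as in Eq.~\eqref{eq:tag_length} makes it $\le\delta$ — gives success probability at least $1-\varepsilon-\delta$; relabeling $\varepsilon+\delta$ as the target failure parameter establishes the general claim, and shows that the smallest admissible rate is $k/n = \tfrac{1}{n}\log|\cT^\varepsilon_X| + o(1)$, with the $o(1)$ coming from the $\log(1/\delta)/n$ overhead which vanishes as $n\to\infty$ for any fixed $\delta$.

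For the i.i.d.\ case I would simply substitute Eq.~\eqref{eq:typical_set}: when $X=(X_1,\dots,X_n)$ with $X_i$ i.i.d., we have $|\cT^\varepsilon_X|\le 2^{n(h(X_1)+\varepsilon)}$ for $n$ large, so $\tfrac{1}{n}\log|\cT^\varepsilon_X|\le h(X_1)+\varepsilon$, and therefore $k/n$ can be taken arbitrarily close to $h(X_1)$ by first sending $\varepsilon\to 0$ and then $n\to\infty$ (the order matters only in that $\varepsilon$ controls both the set size and the residual failure probability, and one checks the standard AEP gives a matching lower bound $\tfrac1n\log|\cT^\varepsilon_X|\to h(X_1)$ so the rate cannot be pushed below $h(X_1)$). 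The one point worth stating carefully is that the hypothesis $x\in C$ plays no role in the reconciliation step itself — it is only relevant for how the tag will later be interpreted as a coset/syndrome in the QKD security proof — so I would remark that the lemma holds verbatim for arbitrary $x$, with the codeword assumption retained merely to match the notation of the subsequent sections.
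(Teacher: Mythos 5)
Your proposal is correct and follows essentially the same route as the paper's own proof: combine the universal-hashing failure bound of Eq.~\eqref{eq:fail_prob} (equivalently the tag-length requirement of Eq.~\eqref{eq:tag_length}) with the set-size bound of Eq.~\eqref{eq:typical_set}, absorb the $\log(1/\delta)/n$ overhead in the large-$n$ limit, and then send $\varepsilon\to 0$ for the i.i.d.\ case. Your added remarks (the AEP matching lower bound and the observation that $x\in C$ is irrelevant to the reconciliation step itself) are correct but not needed for the lemma as stated.
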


\begin{proof}
	From Eq.~\eqref{eq:tag_length}, to identify $e$ with failure probability $\ve + \delta$, the length of the tag can be chosen as $\frac{k}{n} \ge \frac{\log |\cT^{\varepsilon}_n| + \log \frac{1}{\delta}}{n}$. $\delta$ can be arbitrarily close to 0 and $\frac{k}{n}$ can be arbitrarily close to $\frac{\log |\cT^{\varepsilon}_n|}{n}$, provided $n$ is sufficiently large.
		
	If $X = (X_1,X_2,\cdots,X_n)$, $X_i$s are $i.i.d$, then by Eq.~\eqref{eq:typical_set}, $\frac{k}{n} \le h(X_1) + \varepsilon$ for $n$ sufficiently large. Note that $\varepsilon$ can be chosen arbitrarily small. Thus, we draw the conclusion. 
\end{proof}

In this error correction scheme, Alice transmits information to Bob, enabling him to identify and correct the error pattern $e$, ultimately resulting in both parties possessing identical key strings. We designate such a scheme as a one-way error correction scheme.

\subsection{Security proof of QKD based on entanglement distillation}\label{sec:one-way-sec}
We present the security proof of QKD based on entanglement distillation. First, we describe an entanglement distillation protocol that utilizes one-way classical communication \cite{Bennett1996mixedstate, Lo1999Unconditional}. Then, we demonstrate that this protocol, when followed by computational basis measurements, can be reduced to the prepare-and-measure QKD protocol, thereby establishing its security \cite{Shor_simple_2000, Huang_stream_2022}.

\subsubsection{One-way entanglement distillation}
 Denote the perfect EPR pair as $\ket{\psi_{00}} = \frac{1}{\sqrt{2}} (\ket{00} + \ket{11})$. Entanglement distillation can be formalized as this:  After distributing several noisy EPR pairs through a noisy channel to two parties, Alice and Bob measure some of the EPR pairs to estimate the error, with a $2N$-qubits state $\rho^N$ remained. Based on the error estimation, they perform quantum error correction to transform $\rho^N$ into a state $\sigma^M$ which can be arbitrarily close to $\ket{\psi_{00}}^{\otimes M}$ in the large $N$ limit. Then, they measure the state $\sigma^M$ on the $Z$ basis and obtain secure keys. 

To perform the error estimation, Alice and Bob can perform $X$ and $Z$ measurements on subsets of EPR pairs and employ the random sampling method \cite{Fung2010Finite} to obtain the bit error rate $\delta_b$ and phase error rate $\delta_p$ of $2N-$qubits state $\rho^N$, which is defined as:
\begin{equation}\label{eq:bp_errorN}
\begin{split}
	\delta_b = \frac{1}{2}(1 - \frac{1}{N}\tr [\rho^N \sum_{i=1}^N(\sigma^i_z \otimes \sigma^i_z)]), \\
	\delta_p = \frac{1}{2}(1 - \frac{1}{N}\tr [\rho^N \sum_{i=1}^N(\sigma^i_x \otimes \sigma^i_x)]),
\end{split}	
\end{equation}
in which $\sigma_z^i$($\sigma_x^i$) denotes the Pauli operator that perform $\sigma_z$($\sigma_x$) on site $i$ and acts trivially on other sites. Error estimation, which is equivalent to the parameter estimation step in BB84 protocols, can be treated as a separated process preceding the subsequent distillation protocols.

Denote the Bell basis as $\ket{\psi_{ab}} = I \otimes X^a Z^b \ket{\psi_{00}}$, $a,b \in \{0,1\}$. That is,
\begin{equation}
	\begin{split}
		\ket{\psi_{00}} = \frac{1}{2}( \ket{00} + \ket{11}), \\
		\ket{\psi_{10}} = \frac{1}{2}( \ket{01} + \ket{10}), \\
		\ket{\psi_{01}} = \frac{1}{2}( \ket{00} - \ket{11}), \\
		\ket{\psi_{11}} = \frac{1}{2}( \ket{01} - \ket{10}). 
	\end{split}
\end{equation}
We express the complete state of $N$ noisy EPR pairs in the Bell basis. In Appendix \ref{app:non-Bell-diagonal}, we show that the off-diagonal terms do not impact the distillation process. Thus, we can restrict our attention to the Bell-diagonal terms. Additionally, the bit and phase errors in each qubit can be made independent and identically distributed ($i.i.d.$) through random permutation \cite{Renner209definetti}. Therefore, without loss of generality, we consider the input state to be a Bell-diagonal state of the form $\rho^{\otimes N}$, where
\begin{equation}
	\rho = \diag(p_{00},p_{01},p_{10},p_{11})
\end{equation}
in the Bell basis $\ket{\psi_{00}},\ket{\psi_{01}},\ket{\psi_{10}},\ket{\psi_{11}}$. From Eq.~\eqref{eq:bp_errorN}, we have
\begin{equation}
\begin{split}
	\delta_b = p_{10} + p_{11},\\
	\delta_p = p_{01} + p_{11}.
\end{split}
\end{equation}

As $\rho$ is Bell-diagonal, $\rho^{\otimes N}$ is in the state $\bigotimes_{i=1}^N \psi_{a_i,b_i}$ with a probability of $\prod_{i=1}^N p_{a_i,b_i}$. The bit error pattern is defined as $e_b = a_1a_2\cdots a_N$, and the phase error pattern is defined as $e_p = b_1b_2\cdots b_N$. Alice and Bob apply CNOT gates from $\rho^{\otimes N}$ to the ancillary qubits according to a hashing matrix $H$ and measure the ancillary qubits to identify the bit error pattern. Subsequently, Alice sends her measurement results to Bob. Bob obtains the bit error syndrome $s_b=Me_b$ and deduces the bit error with high probability. According to Lemma \ref{lem:error_space}, the length of the tag $\frac{k}{n}$ is $h(\delta_b)$ asymptotically.

In entanglement distillation, bit error correction and phase error correction need to be implemented sequentially. Thus, we need to ensure that the two error correction procedures do not affect each other. As stated in the following lemma, this is guaranteed when we use perfect EPR pairs as ancillary qubits.
\begin{lemma}[Bit- and phase-error-correction decoupling \cite{Lo_2003}]\label{lem:bp_decouple}
	Using EPR pairs as ancilla qubits, bit error correction does not affect phase errors and vice versa.
\end{lemma}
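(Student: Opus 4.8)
The plan is to track how the bit-error-correction circuit acts on the Bell basis and verify it commutes with the phase-error measurement. Recall that the bit-error-correction step applies CNOT gates controlled by qubits of $\rho^{\otimes N}$ and targeting fresh ancilla qubits prepared in $\ket{\psi_{00}}$ (perfect EPR pairs), then measures those ancillas in the appropriate basis to read off the bit-error syndrome. The key algebraic fact I would establish first is the ``transversal CNOT on Bell pairs'' identity: applying $\mathrm{CNOT}_{A_1 \to A_2} \otimes \mathrm{CNOT}_{B_1 \to B_2}$ (Alice's CNOT from data to ancilla, Bob's CNOT from data to ancilla) to $\ket{\psi_{a_1 b_1}} \otimes \ket{\psi_{a_2 b_2}}$ yields $\ket{\psi_{a_1,\, b_1 \oplus b_2}} \otimes \ket{\psi_{a_1 \oplus a_2,\, b_2}}$. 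In words, the bit-error label $a$ propagates forward along the CNOT (control to target) while the phase-error label $b$ propagates backward (target to control) — this is just the statement that CNOT maps $X_1 \mapsto X_1 X_2$, $X_2 \mapsto X_2$, $Z_1 \mapsto Z_1$, $Z_2 \mapsto Z_1 Z_2$ under conjugation, transcribed into the Bell-basis error-label bookkeeping.

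From this identity the lemma follows by composition and initialization. Since each ancilla starts as a perfect EPR pair, its initial labels are $(a_2,b_2)=(0,0)$. After the full network of CNOTs implementing the parity-check matrix $H$, each ancilla $j$ carries bit label $\bigoplus_i H_{ji} a_i = (H e_b)_j$, the desired syndrome bit, and carries \emph{zero} phase label — because each ancilla's phase label only ever receives contributions from other ancillas' phase labels (backward propagation is target-to-control, i.e. ancilla-to-data), never from the data qubits. Dually, the data qubits' bit labels are untouched and their phase labels get shifted by sums of ancilla phase labels, which are all $0$. Hence measuring the ancillas to extract $He_b$ (i) returns exactly the bit syndrome and (ii) leaves the phase-error pattern $e_p = b_1\cdots b_N$ on the data qubits completely unchanged and uncorrelated with the measured ancillas, so a subsequent phase-error-correction round sees the same $e_p$ it would have seen without the bit step. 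The symmetric claim — phase correction not disturbing bit errors — follows by the Hadamard-conjugated version of the same argument (phase correction uses CNOTs in the opposite direction, or equivalently CZ-type couplings, and one runs the identity with the roles of $a$ and $b$ swapped).

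The one genuinely load-bearing step is the propagation identity for transversal CNOT on a pair of Bell states, together with the observation that it is precisely the perfect-EPR initialization $(0,0)$ of the ancillas that prevents phase information from leaking \emph{into} the ancilla registers; if the ancillas were themselves noisy, their nonzero phase labels would feed back onto the data and corrupt $e_p$. I would therefore present the identity carefully (either by direct computation on the four Bell states or via the Heisenberg/stabilizer picture of CNOT conjugation) and then note that decoupling is immediate once the ancillas contribute the trivial error label. Everything else — composing CNOTs, reading off $(He_b)_j$, tracing out measured ancillas — is routine linear bookkeeping over $\mathbb{F}_2$, so I would state it briefly and refer to \cite{Lo_2003} for the original treatment.
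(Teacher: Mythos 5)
Your proposal is correct and follows essentially the same route as the paper's proof: the load-bearing identity you state is exactly the paper's bilateral-CNOT relation $(\CNOT_{13}\CNOT_{24})\ketbra{\psi_{a,b}}\otimes\ketbra{\psi_{c,d}}(\CNOT_{13}\CNOT_{24})^{\dagger}=\ketbra{\psi_{a,b\oplus d}}\otimes\ketbra{\psi_{a\oplus c,d}}$, with the perfect-EPR initialization $(c,d)=(0,0)$ doing the same work, and the reverse direction handled by the same Hadamard/symmetry remark. The only cosmetic difference is that the paper also gives an equivalent-circuit picture and explicitly checks that the final $X$-gate correction leaves $b$ unchanged via $X\ket{\psi_{ab}}=\ket{\psi_{a\oplus 1,b}}$, a one-line step you could add for completeness.
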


\begin{proof}
Consider the effect of bit error correction on the phase error as an example. As shown in Fig.~\ref{fig:proof_decouple}, if Alice and Bob use EPR pairs as ancillary qubits, the equivalent circuit indicates that, in the phase error space, Alice and Bob \textit{simultaneously} apply the XOR operation or do not apply it at all, which keeps the phase error unchanged.

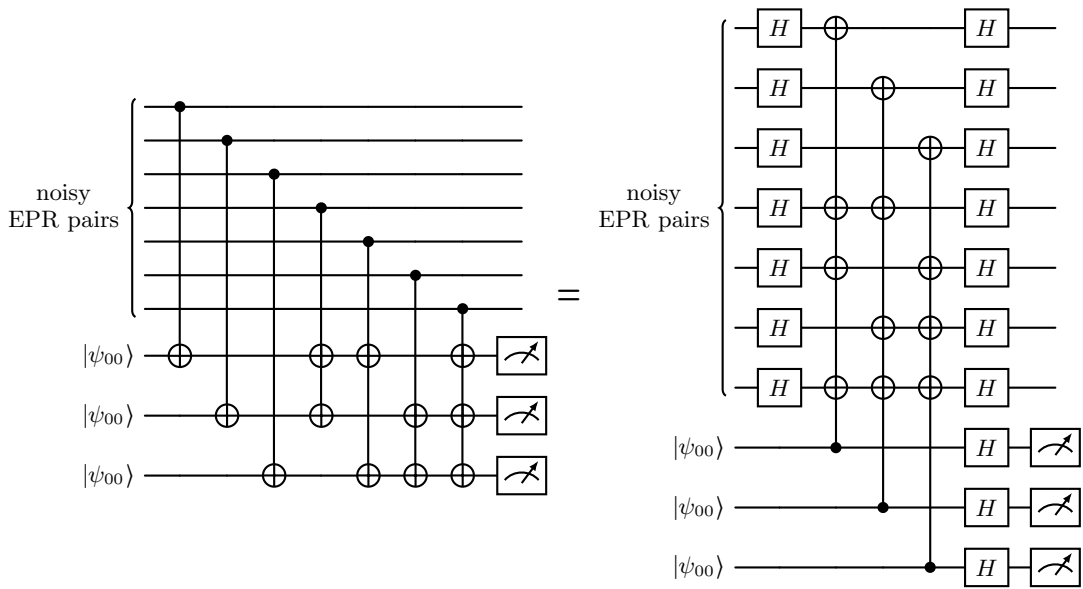
\begin{figure}[!h]
	\begin{quantikz}[row sep=0.3cm, column sep=0.3cm]
		\lstick[7]{noisy \\EPR pairs}& \ctrl{7} &&&&&&&  \\
		&& \ctrl{7} &&&&&&  \\
		&&& \ctrl{7} &&&&& \\
		&&&& \ctrl{5} &&&& \\
		&&&&& \ctrl{5} &&& \\
		&&&&&& \ctrl{4} &&  \\
		&&&&&&& \ctrl{3} &  \\
		\lstick{$\ket{\psi_{00}}$}&\targ{}&&& \targ{} &\targ{}&& \targ{} &\meter{}  \\
		\lstick{$\ket{\psi_{00}}$}&& \targ{} && \targ{} && \targ{} &\targ{}&\meter{}  \\
		\lstick{$\ket{\psi_{00}}$} &&& \targ{} && \targ{} & \targ{} & \targ{} &\meter{}  \\
	\end{quantikz} \raisebox{0.5\height}{\scalebox{2}{$=$}} 
	\begin{quantikz}[row sep=0.3cm, column sep=0.3cm]
		\lstick[7]{noisy \\EPR pairs}&\gate{H} & \targ{} &&& \gate{H} &  \\
		&\gate{H} && \targ{} && \gate{H} &  \\
		&\gate{H} &&& \targ{} & \gate{H} & \\
		&\gate{H} &\targ{} & \targ{} && \gate{H} & \\
		&\gate{H} &\targ{} && \targ{} & \gate{H} &  \\
		&\gate{H} & & \targ{} & \targ{} & \gate{H} &  \\
		&\gate{H} & \targ{} &\targ{}& \targ{} & \gate{H} &  \\
		\lstick{$\ket{\psi_{00}}$}&&\ctrl{-7}&&& \gate{H} &\meter{}  \\
		\lstick{$\ket{\psi_{00}}$}&&& \ctrl{-7} && \gate{H} &\meter{} \\
		\lstick{$\ket{\psi_{00}}$} &&&& \ctrl{-7} & \gate{H} &\meter{}  \\
	\end{quantikz}
	\caption{The quantum circuit for hashing and measuring the tag, along with its equivalent circuit. Only one side of Alice and Bob is illustrated. Here, the Hadamard gates transform phase errors into bit errors. Since the ancillary qubits are perfect EPR pairs, Alice and Bob simultaneously apply the XOR operation or not, ensuring that the phase error pattern remains unchanged.}
	\label{fig:proof_decouple}
\end{figure}

We can also directly prove this statement by considering the effect of the bilateral CNOT operation on EPR pairs:
\begin{equation}\label{eq:bilateralCNOT}
	(\CNOT_{13} \CNOT_{24}) \ketbra{\psi_{a,b}}_{12} \ketbra{\psi_{c,d}}_{34} (\CNOT_{13} \CNOT_{24})^{\dagger}= \ketbra{\psi_{a,b\oplus d}}_{12}\ketbra{\psi_{a\oplus c, d}}_{34}.
\end{equation}
Here, $\CNOT_{13}$ ($\CNOT_{24}$) corresponds to the CNOT operation on Alice's (Bob's) side. When the ancillary qubits are initially perfect EPR pairs, meaning $c = 0$ and $d = 0$, the subsequent CNOT operations only alter $c$, while keeping $d = 0$ unchanged. Thus, $b$ remains unchanged after each CNOT operation, ensuring that the phase error is unaffected.

After identifying the bit error pattern, Bob can apply $X$ gates to correct bit errors in some of the noisy EPR pairs. The $X$ operation does not affect the phase error since $X\ket{\psi_{ab}} = \ket{\psi_{a\oplus 1, b}}$. Thus, bit error correction does not influence the phase error pattern, and vice versa, due to the symmetry between bit and phase error correction.
\end{proof}

In the BB84 protocol, Alice and Bob are aware of both the bit and phase error rates. In the worst-case scenario, these errors are independent. Consequently, correcting the bit errors consumes $n h(\delta_b)$ EPR pairs, while correcting the phase errors consumes $n h(\delta_p)$ EPR pairs. Therefore, a total of $n h(\delta_b) + n h(\delta_p)$ EPR pairs are used to obtain $n$ perfect EPR pairs with high probability. The distillable rate, defined as the net gain of perfect EPR pairs divided by the number of noisy EPR pairs consumed, is given by:
\begin{equation}\label{eq:shor_preskill}
	r = 1 - h(\delta_b) - h(\delta_p).
\end{equation} 

If Alice and Bob acquire more detailed information about the error terms beyond just the bit error rate $\delta_b$ and the phase error rate $\delta_p$, the distillable rate can be further improved. For instance, by performing measurements in the $X$, $Y$, and $Z$ bases during parameter estimation—as in the six-state protocol \cite{Lo_proof_2001}—they can fully characterize all the diagonal terms of the noisy EPR pairs. This additional information increases the tolerable bit error rate up to $12.7\%$. Notably, this entanglement distillation protocol requires only one-way communication, as it leverages one-way error correction, as described in Lemma \ref{lem:error_space}.

\subsubsection{Reduction to prepare-and-measure QKD protocol}
Suppose Alice and Bob successfully distill $M$ perfect EPR pairs, $\ket{\psi_{00}}^{\otimes M}$. Then, Eve is only left with a state $\rho_E$ such that the total system is in a product state:
\begin{equation}
	\ketbra{\psi_{00}}^{\otimes M} \otimes \rho_E.
\end{equation}
After Alice and Bob measure the perfect EPR pairs in the computational basis, the state becomes:
\begin{equation}
 	2^{-M}\sum_{k \in \{0,1\}^M} \ketbra{kk}_{AB} \otimes \rho_E.
\end{equation}
Here, Eve's state $\rho_E$ is independent of the secret keys $k$, ensuring that the keys are secure. In practice, when Alice and Bob hold states that are very close to perfect EPR pairs, security can still be proven \cite{BenOr2005universalcomposable, Renner2005universallycomposable, Fung2010Finite}. Consequently, Eq.~\eqref{eq:shor_preskill} provides the key rate for QKD. The maximum tolerable bit error rate for this one-way protocol, which represents the threshold at which Alice and Bob can still obtain a net gain of secure keys, is $11\%$.

We call the process of bit error correction information reconciliation because it ensures that Alice and Bob ultimately share the same keys. To guarantee the keys are secure, we only need to show that the phase error pattern can be identified without actually calculating it \cite{Shor_simple_2000}. Consequently, the phase error correction can be replaced with privacy amplification, which introduces additional shared randomness to erase Eve's knowledge about the keys \cite{Huang_stream_2022}. The quantum key distribution process can then be succinctly summarized as information reconciliation followed by privacy amplification, as illustrated in Fig.~\ref{fig:shor_presill_qkd}.

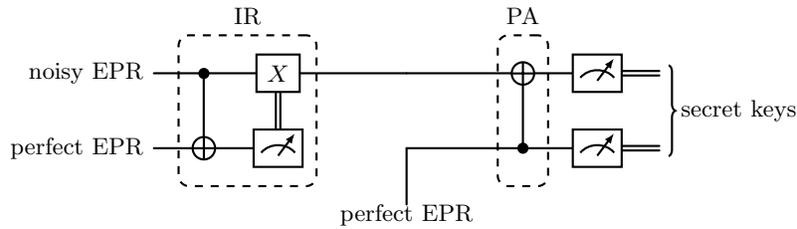
\begin{figure}[!h]
\begin{quantikz}
\lstick[1]{noisy EPR} & \ctrl{1} \gategroup[2,steps=2,style={dashed,rounded
corners, inner xsep=2pt}]{IR} & \gate{X} & & \targ{} \gategroup[2,steps=1,style={dashed,rounded
corners, inner xsep=2pt}]{PA}&\meter{} &\setwiretype{c}\rstick[2]{secret keys} \\
\lstick[1]{perfect EPR}  & \targ{} &\meter{} \wire[u][1]{c}& \setwiretype{n} & \ctrl{-1} \setwiretype{q}  & \meter{} &\setwiretype{c}\\
\setwiretype{n} &&&{\text{perfect EPR}} \wire[u][1]{q} &&&
\end{quantikz}
\caption{The QKD protocol consists of two essential steps: information reconciliation (IR) and privacy amplification(PA)  \cite{Shor_simple_2000, Huang_stream_2022}. Here, we focus on Bob's operations. In the information reconciliation step, Alice and Bob conduct hashing from noisy EPR pairs to perfect EPR pairs to extract the bit error syndrome(illustrated by the CNOT operation from noisy EPR to perfect EPR). Subsequently, Bob corrects his noisy EPR pairs based on the obtained syndrome, employing the controlled-X operation. They proceed to privacy amplification, which uses additional perfect EPR pairs to add randomness to their keys, ensuring the security of the generated keys (illustrated by the CNOT operation from perfect EPR to noisy EPR). Finally, the process concludes with measuring the noisy EPR pairs in the computational basis, obtaining the secret keys.}
\label{fig:shor_presill_qkd}
\end{figure}

The procedure in Fig.~\ref{fig:shor_presill_qkd} requires quantum computers. Fortunately, it is equivalent to the prepare-and-measure protocols \cite{Shor_simple_2000, Huang_stream_2022}. This reduction to classical operations can be understood through the framework proposed by \cite{Liu2022classically}, which defines the classically replaceable operation.
\begin{definition}[Classically Replaceable Operation, CRO \cite{Liu2022classically}]
	Consider operations right before a measurement. A CRO can be realized by first measuring the input state and then processing the outcomes classically.
\end{definition}

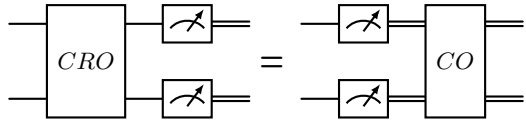
\begin{figure}[!h]
\begin{quantikz}
 & \gate[2]{CRO} & \meter{} & \setwiretype{c}  \\
 & & \meter{} & \setwiretype{c} \\
\end{quantikz}  \raisebox{0.7\height}{\scalebox{2}{$=$}} 
\begin{quantikz}
 &\meter{}&  \gate[2]{CO} \setwiretype{c}&   \\
 &\meter{}& \setwiretype{c}&\\
\end{quantikz} 
\caption{Illustration of CRO. A CRO operation followed by measurement in the computational basis is equivalent to measurement in the computational basis followed by classical operation (CO).}
\label{fig:CRO}
\end{figure}

The process of replacing quantum operations with classical operations is depicted in Fig.~\ref{fig:CRO}. In the QKD protocol, information reconciliation and privacy amplification are classified as CROs \cite{Huang_stream_2022, Liu2022classically}. Consequently, the final measurement in the QKD protocol can be moved to the beginning to obtain raw keys $x$ and $y$ on each side, followed by classical information reconciliation and privacy amplification, as illustrated in Fig.~\ref{fig:QKD_CRO}. Using perfect EPR pairs is thus reduced to employing pre-shared secret keys for encrypting communication via OTP. For information reconcillation, Alice uses a pre-shared secret key to encrypt the tag $s_x$ of her raw keys $x$, $s_x = Hx$. Bob can then use the same pre-shared key to decrypt Alice's tag, obtain the symdrome of bit error pattern $s_b = s_x \oplus s_y$, and correct the bit errors $e_b = x \oplus y$. For the adversary, Eve, the encrypted message appears as a random string, offering no information. For privacy amplification, Alice and Bob use pre-shared secret keys to enhance the secrecy of keys \cite{Huang_stream_2022}.  Ultimately, the QKD protocol is reduced to a prepare-and-measure protocol.

\begin{figure}[!h]
\raisebox{0.7\height}{(a)} 
\begin{quantikz}
\lstick[1]{noisy EPR} & \gate[2]{CRO}\gategroup[2,steps=2,style={dashed,rounded
corners, inner xsep=2pt}]{IR} & \gate[1]{CRO}& & \gate[2]{CRO}\gategroup[2,steps=1,style={dashed,rounded
corners, inner xsep=2pt}]{PA} &\meter{} &\setwiretype{c} \\
\lstick[1]{perfect EPR}  & \targ{} &\meter{} \wire[u][1]{c}& \setwiretype{n} & \setwiretype{q}  & \meter{} &\setwiretype{c}\\
\setwiretype{n} &&&{\text{perfect EPR}} \wire[u][1]{q} &&&
\end{quantikz} 

\raisebox{0.7\height}{(b)} 
\begin{quantikz}
\lstick[1]{raw key} \setwiretype{c} & \gate[2]{CO}\gategroup[2,steps=2,style={dashed,rounded corners, inner xsep=2pt}]{IR} & \gate[1]{CO}& & \gate[2]{CO}\gategroup[2,steps=1,style={dashed,rounded corners, inner xsep=2pt}]{PA} & \\
\lstick[1]{secret key} \setwiretype{c} & & \ctrl[vertical wire=c]{-1}& \setwiretype{n} & \setwiretype{c}  & \\
\setwiretype{n} &&&{\text{secret key}} \wire[u][1]{c} &&&
\end{quantikz} 

\caption{Reducing the QKD protocol to the prepare-and-measure protocol. (a) Information reconciliation (IR) and privacy amplification (PA) are classified as CROs; thus, the measurement can be moved forward to the beginning. (b) After moving the measurement forward, all operations become classical operations (CO). The perfect EPR becomes the secret key shared by Alice and Bob. Finally, the QKD protocol is reduced to a prepare-and-measure protocol.}
\label{fig:QKD_CRO}
\end{figure}
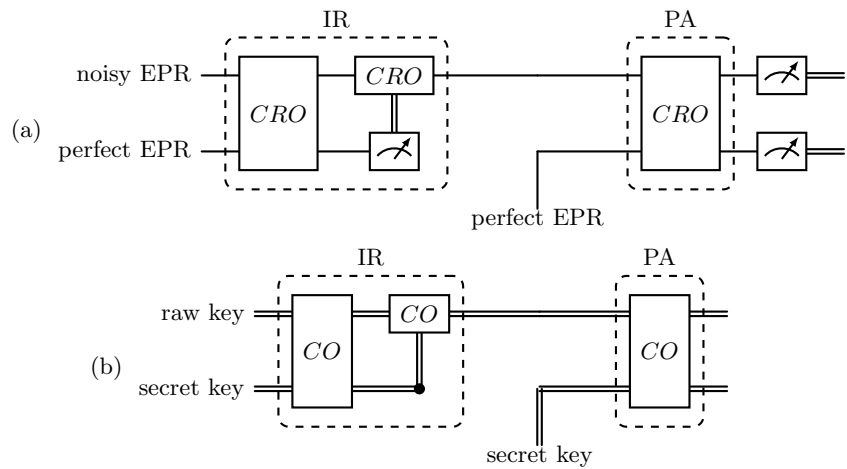

\section{Advantage distillation for quantum key distribution} 
As discussed above, Alice and Bob distill perfect EPR pairs through one-way communication and perform computational basis measurements to obtain their secret key. This process ultimately reduces to classical information reconciliation and privacy amplification using standard one-way hashing techniques. However, not all key distillation schemes rely solely on these two steps. As illustrated in Fig.~\ref{fig:QKDflowchart}, additional preprocessing steps that allow both one-way and two-way communication are essential for various schemes to enhance both the tolerable error rate and the key rate.

An illustrative approach is the B step \cite{Gottesman_proof_2003, Ma_decoy_2006}, which involves segregating all raw keys into distinct groups. Subsequently, the bit error syndrome is determined in each group through linear hashing and two-way classical communication. Groups exhibiting high bit error rates are discarded, enhancing the overall key rate and error tolerance. This operation can be reduced to a prepare-and-measure protocol \cite{Gottesman_proof_2003} and can be seen as a preprocessing step for refining the raw keys before information reconciliation and privacy amplification.

Adding noise is another example where Alice deliberately introduces noise into her raw keys \cite{PhysRevLett.98.020502, PhysRevLett.95.080501}. This intentional noise insertion diminishes Eve's knowledge of the keys, consequently increasing the key rate. Additionally, this method only requires one-way classical communication. While adding noise has not been proven secure through the entanglement distillation method, it has been shown that perfect EPR pairs are not a prerequisite for secure keys. Instead, private states are sufficient to establish secure keys and can be used to demonstrate the security of adding noise \cite{PhysRevLett.94.160502}.

\subsection{Advantage distillation framework} \label{sec:qkd_adv_distill_framework}
Numerous preprocessing techniques can be applied before Alice and Bob perform information reconciliation and privacy amplification. Therefore, developing a general framework to incorporate these techniques is necessary and helpful for further protocol design. Here, we propose the advantage distillation framework, which generalizes the methods above, following the process illustrated in Fig.~\ref{fig:QKDflowchart}. The protocols in our framework, developed within the entanglement distillation scenario, are equivalent to prepare-and-measure QKD protocols. This equivalence arises because all operations allowed in our framework are CROs, allowing a reduction similar to that in Section \ref{sec:one-way-sec}. Consequently, the distillable rate of EPR pairs is equivalent to the key rate of the corresponding QKD protocols.

As depicted in Fig.~\ref{fig:ad_framework}, our advantage distillation framework comprises three steps. Alice and Bob initially have noisy EPR pairs $\rho^{\otimes N}$ \cite{Bennett1996mixedstate}. First, they proceed to a preprocessing operation, which must be a CRO. Next, they discard some EPR pairs. Then, they perform information reconciliation to correct the bit errors. Finally, they proceed to privacy amplification. Ancillary qubits are utilized and measured at each step, and the measurement results can control subsequent operations classically.

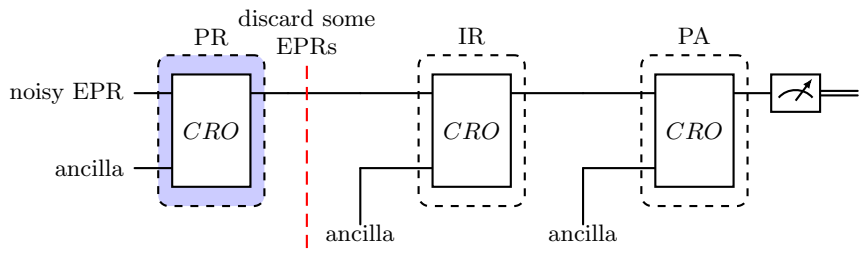
\begin{figure}[!h]
\begin{quantikz}
\lstick[1]{noisy EPR} &\gate[2]{CRO}\gategroup[2,steps=1,style={dashed,rounded corners,fill=blue!20, inner xsep=2pt},background,label style={label position=above,anchor=north,yshift=0.25cm}]{PR}  &\slice{discard some \\EPRs}&& \gate[2]{CRO}\gategroup[2,steps=1,style={dashed,rounded
corners, inner xsep=2pt}]{IR}   & & \gate[2]{CRO}\gategroup[2,steps=1,style={dashed,rounded
corners, inner xsep=2pt}]{PA}  &\meter{} &\setwiretype{c} \\
\lstick[1]{ancilla}  &&\setwiretype{n} &&  \setwiretype{q} & \setwiretype{n} & \setwiretype{q} & \setwiretype{n}  &\\
\setwiretype{n} &&&{\text{ancilla}}\wire[u][1]{q}&\setwiretype{n} &{\text{ancilla}} \wire[u][1]{q} &\setwiretype{n}&&
\end{quantikz} 
\caption{The advantage distillation framework. Initially, Alice and Bob preprocess (PR) the noisy EPR pairs. Subsequently, they discard certain EPR pairs based on the measurement results from preprocessing. They then proceed to information reconciliation (IR) and privacy amplification (PA). Finally, they perform computational basis measurements to obtain the secure key. The type of ancillary qubits is not specified and may vary in different protocols. Ancillary qubits are measured in each step, and all operations can be adaptive based on previous measurement results. We omit the measurement of ancillary qubits and the classical control wires in the figure. The operations must be CRO to enable the reduction to a prepare-and-measure protocol.}
\label{fig:ad_framework}
\end{figure}

We make some remarks on our framework. Firstly, it is possible to conduct multiple preprocessing rounds. As long as a CRO is performed in each round, the preprocessing eventually becomes a CRO. The CRO operation enables moving the measurement forward, effectively reducing the scheme to a prepare-and-measure protocol. Secondly, all the operations can be adaptive based on the measurement outcomes from previous rounds. Thirdly, some EPR pairs may be discarded before entering the information reconciliation process, similar to what is done in the B step. Lastly, the ancillary qubits are not specified, leaving room for devising different protocols. As we will show later, different types of ancillary qubits lead to different key rates, and there are better choices than perfect EPR pairs.

We exemplify our protocol using two previously mentioned examples to illustrate the generality of our approach. Adding noise \cite{PhysRevLett.95.080501, PhysRevLett.98.020502} is achieved by utilizing ancillary qubits in the form of $\sqrt{1-p}\ket{0} + \sqrt{p}\ket{1}$ and applying a CNOT gate from the ancillary qubits to the noisy EPR pairs, as demonstrated in Fig.~\ref{fig:adding_noise}. After measuring the ancillary qubits in the computational basis, the process is equivalent to adding a bit flip to the noisy EPR pair with a probability of $p$. 

\begin{figure}[!h]
	\begin{quantikz}
\lstick[1]{noisy EPR} & \targ{}\gategroup[2,steps=2,style={dashed,rounded corners,fill=blue!20, inner xsep=2pt},background,label style={label position=above,anchor=north,yshift=0.25cm}]{Adding noise} &  &  \\
\lstick[1]{ancilla \\ $\sqrt{1-p}\ket{0} + \sqrt{p}\ket{1}$}  & \ctrl{-1} & \meter{} & \setwiretype{c} 
\end{quantikz}
\caption{Incorporating adding noise \cite{PhysRevLett.98.020502, PhysRevLett.95.080501} into advantage distillation framework. The ancillary qubit is in the form of $\sqrt{1-p}\ket{0} + \sqrt{p}\ket{1}$ and a CNOT gate is applied from the ancillary qubit to the noisy EPR pair. After measuring the ancillary qubit in the computational basis, this process is equivalent to introducing a bit flip to the noisy EPR pair with a probability of $p$.}
\label{fig:adding_noise}
\end{figure}
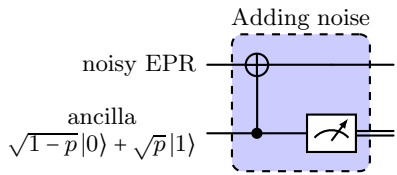

The B step \cite{Gottesman_proof_2003, Ma_decoy_2006} can also be incorporated into our framework, as depicted in Fig.~\ref{fig:B_step}. We present a single round B step for illustration. During this round, a CNOT operation is performed on a pair of noisy EPR pairs, followed by a measurement on one of the two EPR pairs to extract parity information. If the parity differs between Alice and Bob's sides, the remaining EPR pairs are discarded. Conversely, the remaining EPR pair is retained if the parity is the same. As remarked before, conducting the B step in multiple rounds is also a CRO and can be incorporated into our framework. 

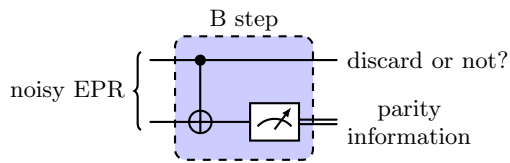
\begin{figure}[!h]
	\begin{quantikz}
\lstick[2]{noisy EPR} & \ctrl{1}\gategroup[2,steps=2,style={dashed,rounded corners,fill=blue!20, inner xsep=2pt},background,label style={label position=above,anchor=north,yshift=0.25cm}]{B step} &  &  \rstick[1]{discard or not?}\\
& \targ{} & \meter{} & \setwiretype{c} \rstick[1]{parity \\information}
\end{quantikz}
\caption{Incorporating the B step \cite{Gottesman_proof_2003, Ma_decoy_2006} into the advantage distillation framework. A CNOT operation is conducted on a pair of noisy EPR pairs. Subsequently, a measurement is performed on one of the two EPR pairs to extract parity information. If the parity differs between the two sides of Alice and Bob, the remaining EPR pair is discarded. Conversely, if the parity is the same, the EPR pair is retained.}
\label{fig:B_step}
\end{figure}

\subsection{Advantage distillation based on classical linear code}
Now, we develop the advantage distillation framework based on classical linear code. Notice that the B step corresponds to the $[2\ 1\ 2]$ code, which is not always optimal compared to other codes \cite{nonlinear_footnote}.  Thus, we incorporate the fundamental concept of the B step and extend it to encompass a general error-correcting code, as illustrated in Fig.~\ref{fig:ad_qec_frame} and Fig.~\ref{fig:hashing_detail} \cite{nonlinear_footnote}. 

The EPR pairs are initially divided into groups, each containing $n$ EPR pairs. Let $e_b$ and $e_p$ denote the bit error pattern and phase error pattern within a group. The linear hashing matrix $H$ is applied in each group. Subsequently, Alice and Bob engage in two-way classical communication to derive the bit error syndrome $s_b = He_b$. Based on the error syndrome $s_b$ and the corresponding key rate associated with this syndrome, as derived in Theorem \ref{thm:kr_OTP} and Theorem \ref{thm:kr_noOTP}, they decide whether to discard the group of EPR pairs. Following this decision, information reconciliation and privacy amplification are applied, and the secure keys are obtained.

\begin{figure}[!h]
\begin{quantikz}
\lstick[3]{noisy EPR} &\qwbundle{n} & \gate[4]{CRO}\gategroup[4,steps=1,style={dashed,rounded corners,fill=blue!20, inner xsep=2pt},background,label style={label position=above,anchor=north,yshift=0.25cm}]{Hashing}  &\slice{discard some \\ EPRs}& & \gate[2]{CRO}\gategroup[2,steps=1,style={dashed,rounded
corners, inner xsep=2pt}]{IR}   & & \gate[2]{CRO}\gategroup[2,steps=1,style={dashed,rounded
corners, inner xsep=2pt}]{PA}  &\meter{} &\setwiretype{c} \\
  &\setwiretype{n} \vdots & &\setwiretype{n}& \wire[d][2]{q}  &  \setwiretype{q} &\setwiretype{n} \wire[d][2]{q}&\setwiretype{q} &\setwiretype{n} & \\
  &\qwbundle{n} & &\setwiretype{n}&&   & & & & \\
\lstick[1]{ancilla}  && &\setwiretype{n}& {\text{ancilla}} &  & {\text{ancilla}} &  &
\end{quantikz} 
\caption{The advantage distillation framework based on classical linear code. Firstly, Alice and Bob divided the noisy EPR pairs into groups, each containing $n$ EPR pairs. They then apply the parity matrix of the classical linear code to each group, extracting the parity information. Then, they engage in two-way classical communication to derive the bit error syndrome (contained in the blue box titled ``Hashing''). Based on the error syndrome, they decide whether to discard this group. Then, they proceed to information reconciliation and privacy amplification to obtain the secure key.}
\label{fig:ad_qec_frame}
\end{figure}
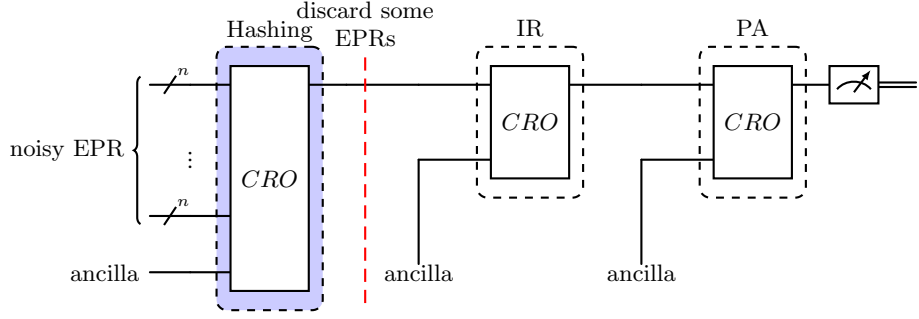

The hashing details are illustrated in Fig.~\ref{fig:hashing_detail}. Alice and Bob utilize CNOT gates within each group based on the $(n-k)\times n$ parity check matrix $H$ to operate on the noisy EPR pairs and ancillary qubits. The hashing process is a CRO, allowing the measurement to be shifted forward and rendering the preprocessing classical.

It is worth noting that the ancillary qubits can be selected as either $\ket{\psi_{00}}$ or $\ket{00}$, representing whether Alice and Bob opt to encrypt their messages using OTP during the two-way communication step. As detailed in Sec.~\ref{sec:one-way-sec}, OTP encryption decouples the bit and phase errors. Surprisingly, we show that the OTP step is unnecessary, and the key rate is \emph{higher} when Alice and Bob do not encrypt their messages, as proved in Lemma \ref{lem:phase_without_OTP}.

In the original B step, hashing is performed between data qubits, as shown in Fig.~\ref{fig:B_step}. Our framework utilizes additional ancillary qubits to carry the hashing value.  It turns out that in-place hashing yields the same key rate as using additional ancillary qubits without OTP encryption. We compare these two methods in Appendix \ref{app:key_rate}.

Our QKD advantage distillation framework based on classical linear code is summarized as follows.

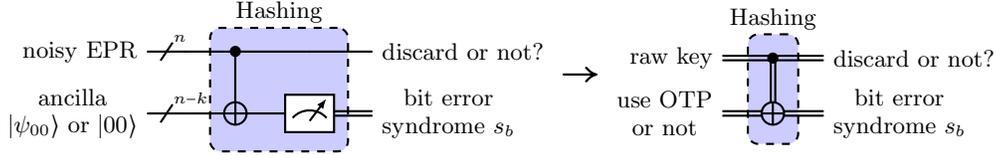
\begin{figure}[!h]
	\begin{quantikz}
\lstick[1]{noisy EPR} & \qwbundle{n} & \ctrl{1}\gategroup[2,steps=2,style={dashed,rounded corners,fill=blue!20, inner xsep=2pt},background,label style={label position=above,anchor=north,yshift=0.25cm}]{Hashing} &  & \rstick[1]{discard or not?} \\
\lstick[1]{ancilla \\ $\ket{\psi_{00}}$ or $\ket{00}$} &\qwbundle{n-k}& \targ{} & \meter{} & \setwiretype{c} \rstick[1]{bit error \\ syndrome $s_b$}
\end{quantikz} \raisebox{0.4\height}{\scalebox{2}{$\rightarrow$}} 
\begin{quantikz}
\lstick[1]{raw key}  \setwiretype{c} & \ctrl[vertical wire=c]{1}\gategroup[2,steps=1,style={dashed,rounded corners,fill=blue!20, inner xsep=2pt},background,label style={label position=above,anchor=north,yshift=0.25cm}]{Hashing} & \rstick[1]{discard or not?}\\
\lstick[1]{use OTP\\ or not} \setwiretype{c} &\targ{} & \rstick[1]{bit error \\ syndrome $s_b$}
\end{quantikz}
\caption{Preprocessing of raw keys using classical linear code. Left: Alice and Bob employ CNOT gates from noisy EPR pairs to ancillary qubits within each group based on the $(n-k)\times n$ parity check matrix $H$. Following the measurements of ancillary qubits and two-way communication (not depicted), they obtain the bit error syndrome $s_b$. The decision to discard the group is based on the key rate associated with this bit error syndrome. Right: As the hashing process is a CRO, the measurement can be shifted forward, rendering the entire process classical. The choice of ancillary qubits, either $\ket{\psi_{00}}$ or $\ket{00}$, corresponds to the use of OTP encryption or not, respectively.}
\label{fig:hashing_detail}
\end{figure}

\begin{mybox}[label={box:ad_protocol}]{Advantage distillation based on classical linear code}
	\textbf{Input:} 
		\begin{enumerate}
			\item Raw keys obtained after data sifting.
			\item {BB84 protocol: bit error $\delta_b$, phase error $\delta_p$. \\
			   Six-state protocol: Bell-diagonal terms $p_{00}, p_{01}, p_{10}, p_{11}$.}
		\end{enumerate}
	\textbf{Advantage distillation:}
		\begin{enumerate}
			\item Alice and Bob choose a $[n\ k\ d]$ code, with parity check matrix $H$ and generator matrix $G$.
			\item Alice and Bob divide $n$-bit raw key into one group.
			\item Alice and Bob implement the same hash operation $H$ to raw keys in each group.
			\item Alice and Bob compare their results via two-way communication to deduce bit error syndrome $s^j_b$ for each group. They can choose whether to perform OTP encryption.
			\item For each bit error syndrome $s_b^j$, Alice and Bob keep this group if the key rate $r^j$ corresponding to error syndrome $s_b^j$ satisfying $r^j > 0$. Otherwise, they discard the group. The key rate formula for $r^j$ is given in Sec.~\ref{sec:qkd_adv_distill_key_rate}.
			\item Alice and Bob proceed to information reconciliation and privacy amplification on the raw keys to obtain secure keys.
		\end{enumerate}
\end{mybox}

\section{Key rate formula for advantage distillation}\label{sec:qkd_adv_distill_key_rate}

Here, we derive the key rate formula for our QKD advantage distillation framework. Firstly, we introduce some notations. For a multiset $S$ of positive real numbers satisfying $\sum_{x\in S} x = 1$, the entropy of this multiset $h(S)$ is defined as:
\begin{equation}
	\begin{split}
		h(S) &= - \sum_{x \in S} x \log x.
	\end{split}
\end{equation}
The entropy of a random variable $X$ with probability mass function $p(X = x)$ is given by $h(X) = h(\{p(X=x)\}_x)$. The conditional entropy of two random variable $X,Y$ is defined as 
\begin{equation}
	h(Y|X) = \sum_x p(X=x) h(Y|X=x) = \sum_x p(X=x) h(\{p(Y=y|X=x)\}_y).
\end{equation}

 Suppose Alice and Bob choose $[n\ k\ d]$ code $C$ with parity check $H$ and generator matrix $G$. There are totally $2^{n-k}$ bit error syndromes $\{s^j_b\}_j$. For a specific syndrome, there are $2^{k}$ bit error patterns $\{e^j_{b,i}\}_i$ corresponding to this syndrome. Denote the probability of getting bit error syndrome $s^j_b$ as $q^j$, and the probability of getting bit error pattern $e^{j}_{b,i}$ as $q^j_i$. By definition, we have
\begin{equation}
	q^j = \sum_{i=0}^{2^k-1} q^j_i.
\end{equation}

We classify the phase error patterns according to the dual code $C^{\perp}$ of $C$. That is, two phase errors $e^{j'}_{p,i'}$ and $e^{j'}_{p,k'}$ corresponding to same syndrome $s_p^{j'}$ if $G^T e^{j'}_{p,i'} = G^T e^{j'}_{p,k'}$. There are $2^{k}$ phase error syndrome $\{s^{j'}_{p}\}_{j'}$ and  $2^{n-k}$ phase error patterns $\{e^{j'}_{p,i'}\}_{i'}$ corresponding to a specific syndrome $s^{j'}_{p}$. Define $q^{jj'}_i$ as the probability of getting bit error pattern $e^j_{b,i}$ and phase error syndrome $s^{j'}_{p}$, and $q^{jj'}_{ii'}$ as the probability of getting bit error pattern $e^j_{b,i}$ and phase error pattern $e^{j'}_{p,i'}$. Then, by definition, we have
\begin{equation}
\begin{split}
	q^{j}_i &= \sum_{j'=0}^{2^k-1} q^{jj'}_i,\\
	q^{jj'}_i &= \sum_{i'=0}^{2^{n-k}-1} q^{jj'}_{ii'}.
\end{split}
\end{equation}

We list our notations in the following table.

\begin{center}
\begin{table}[H]
	\caption{Notation}\label{tab:Notation} \centering
	\begin{tabular}{cc}
		\hline
		bit error syndrome & $s^j_b$\\
		bit error pattern & $e^j_{b,i}$\\
		phase error syndrome & $s^{j'}_p$\\
		phase error pattern & $e^{j'}_{p,i'}$\\
		probability for bit error syndrome $s^j_b$ & $q^j$\\
		probability for bit error pattern $e^j_{b,i}$ & $q^j_i$\\
		probability for bit error pattern $e^j_{b,i}$ and phase error syndrome $s^{j'}_p$ & $q^{jj'}_i$\\
		probability for bit error pattern $e^j_{b,i}$ and phase error pattern $e^{j'}_{p,i'}$ & $q^{jj'}_{ii'}$\\
		\hline
	\end{tabular}
\end{table}
\end{center}

\subsection{With OTP encryption}
We derive the key rate formula for advantage distillation with OTP encryption in two-way classical communication. Alice and Bob have noisy EPR pairs divided into groups, each containing $n$ EPR pairs. The initial step involves the application of the parity check matrix $H$ to identify the error syndrome $s^j$. Subsequently, OTP encryption consumes $n-k$ secret keys per group.

After identifying the bit error syndrome $s^j_b$, the bit error pattern $e^j_{b,i}$ is contained in the set $\{e^j_{b,i}\}_i$. Then, Alice and Bob perform one-way error correction to correct $e^j_{b,i}$. Based on Lemma \ref{lem:error_space}, the asymptotic secret key consumption is the entropy of bit error patterns conditional on the bit error syndrome $s^j$, which can be expressed as 
\begin{equation}\label{eq:consume_bit_pattern}
	I_{b}^j = h(\{\frac{q^j_i}{q^j}\}_{i}).
\end{equation}

Upon identification of the bit error pattern, the focus shifts to determining the phase error pattern $e^{j'}_{p,i'}$. As mentioned in Sec. \ref{sec:one-way-sec}, to guarantee the keys are secure, Alice and Bob do not need to identify the exact phase error pattern \cite{Shor_simple_2000} but only need to calculate the length of the tag required to perform phase error correction \cite{Gottesman_2004_security, Huang_stream_2022}, which gives the key consumption of privacy amplification. According to Lemma \ref{lem:error_space}, the length of the tag is given by:
\begin{equation}\label{eq:ph_consumption_OTP}
\begin{split}
	I^{j}_{p,i} &= h(\{\frac{q^{jj'}_{ii'}}{q^j_i}\}_{j',i'})	\\
	&= h(\{\frac{q^{jj'}_{i}}{q^j_i}\}_{j'}) + \sum_{j'} \frac{q^{jj'}_i}{q^j_i}h(\{\frac{q^{jj'}_{ii'}}{q^{jj'}_i}\}_{i'}),
\end{split}
\end{equation}
where the last equality is derived from the chain rule of the entropy function $h$. That is, for two random variables $X,Y$, the entropy of their joint distribution $(X,Y)$ satisfies
\begin{equation}
\begin{split}
	h((X,Y)) &= h(X) + h(Y|X) \\
	&= h(\{p(X=x)\}_x) + \sum_x p(X=x) h(\{p(Y=y|X=x)\}_y). 
\end{split}
\end{equation}

The cumulative consumption in the one-way error correction of syndrome $s^j_b$ is expressed as:
\begin{equation}\label{eq:tot_consumption_OTP}
\begin{split}
	R^j &= I^j_b + \sum_i \frac{q^j_i}{q^j} I_{p,i}^j \\
            &= h(\{\frac{q^j_i}{q^j}\}_{i}) + \sum_{i} \frac{q^j_i}{q_j} h(\{\frac{q^{jj'}_{ii'}}{q^{j}_i}\}_{j',i'})	\\
            &= h(\{\frac{q^{j,j'}_{i,i'}}{q^j}\}_{i,j',i'}).
\end{split}
\end{equation}
The first line arises from the fact that the cost of the raw keys $R^j$, after identifying the syndrome $s_b^j$, comprises two components: the cost of identifying the bit error pattern, given by \( I_b^j \), and the expected cost of identifying the phase error pattern across different bit error patterns, calculated as $\sum_i \frac{q_i^j}{q^j} I_{p,i}^j$.

The key idea for advantage distillation to increase the key rate and error threshold is as follows: If $R^j$ exceeds $n$, the corresponding group of keys does not contribute positively to the final secret key and thus will be discarded. Conversely, if $R^j$ is less than $n$, this group will be retained and produce $n - R^j$ bits of secure keys. Combining with the consumption $n-k$ of OTP encryption for each group, the key rate formula for advantage distillation with OTP encryption is obtained.

\begin{theorem}[Key rate formula for advantage distillation with OTP encryption]\label{thm:kr_OTP} 
Suppose Alice and Bob perform two-way post-processing with one-time pad encryption in the advantage distillation framework using an $[n\ k\ d]$ linear code. The probabilities of getting bit error syndrome $s^j_b$, bit error pattern $e^j_{b,i}$, phase error syndrome $s_p^{j'}$ and phase error pattern $e^{j'}_{p,i'}$ are given in Table \ref{tab:Notation}. Then, the key rate is
\begin{equation}
\begin{split}
	r &= \sum_j q^jr^j  - \frac{n-k}{n},\\
	r^j&= \max\{1-\frac{1}{n}h(\{\frac{q^{j,j'}_{i,i'}}{q^j}\}_{i,j',i'}), 0\}.
\end{split}
\end{equation}
\end{theorem}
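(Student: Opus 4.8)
The plan is to assemble the key rate by accounting, group by group, for the resources consumed and produced in the entanglement-distillation picture, and then argue the overall rate is the average of the per-syndrome rates minus the OTP overhead. First I would fix attention on a single group of $n$ noisy EPR pairs and condition on the event that the bit error syndrome is $s^j_b$, which occurs with probability $q^j$. Since the distillation protocol is a CRO-composed procedure that reduces to the prepare-and-measure protocol (as established in Section~\ref{sec:one-way-sec}), it suffices to count EPR pairs consumed and distilled. The preprocessing step reveals $s^j_b$; conditioned on this, the residual bit error pattern is a random variable on the $2^k$ patterns $\{e^j_{b,i}\}_i$ with conditional probabilities $q^j_i/q^j$, so by Lemma~\ref{lem:error_space} information reconciliation asymptotically consumes $I^j_b = h(\{q^j_i/q^j\}_i)$ EPR pairs per group.

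Next I would handle the phase side. By the Shor--Preskill reduction we do not need to learn the phase error pattern, only to certify that it lies in a set small enough to be tagged; by Lemma~\ref{lem:error_space} this tag length, conditioned on bit error pattern $e^j_{b,i}$, is the entropy of the phase error pattern given that bit pattern, namely $I^j_{p,i} = h(\{q^{jj'}_{ii'}/q^j_i\}_{j',i'})$, and the chain rule gives the decomposition in Eq.~\eqref{eq:ph_consumption_OTP}. (Here the role of OTP encryption, via Lemma~\ref{lem:bp_decouple}, is that the bit and phase error corrections decouple, so that these two consumptions simply add.) Averaging the phase cost over the bit patterns with weights $q^j_i/q^j$ and adding $I^j_b$ yields $R^j = I^j_b + \sum_i (q^j_i/q^j) I^j_{p,i}$, which collapses by two applications of the chain rule to the single joint entropy $R^j = h(\{q^{j,j'}_{i,i'}/q^j\}_{i,j',i'})$, exactly Eq.~\eqref{eq:tot_consumption_OTP}.

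Then I would invoke the decision rule of Box~\ref{box:ad_protocol}: a group with syndrome $s^j_b$ is kept only if $n - R^j > 0$, in which case it yields $n - R^j$ distilled EPR pairs, i.e.\ a per-group rate of $\max\{1 - R^j/n, 0\} = r^j$; otherwise it contributes $0$. Since syndrome $s^j_b$ occurs in a fraction $q^j$ of the groups (by the law of large numbers, the $i.i.d.$ structure of $\rho^{\otimes N}$ after random permutation guarantees the empirical frequency concentrates at $q^j$), the total distilled EPR pairs per input pair is $\sum_j q^j r^j$. Finally, subtracting the OTP cost: each group consumes $n-k$ pre-shared secret bits for the two-way communication regardless of whether it is kept, contributing $-(n-k)/n$ per input pair. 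Combining gives $r = \sum_j q^j r^j - (n-k)/n$, as claimed.

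The main obstacle, and the place where care is needed, is the phase-error-counting step: justifying that the privacy amplification cost conditioned on $(s^j_b, e^j_{b,i})$ is precisely the conditional entropy $I^j_{p,i}$ requires that the phase error pattern conditioned on this data still has the product/$i.i.d.$-across-groups structure needed to apply Lemma~\ref{lem:error_space}, and that revealing $s^j_b$ and performing bit reconciliation does not leak additional phase information to Eve beyond what is already tagged — this is where Lemma~\ref{lem:bp_decouple} (decoupling via OTP-encrypted, i.e.\ EPR-pair, ancillas) and the CRO reduction do the real work, and where the subtlety of whether OTP is actually needed (addressed separately in Lemma~\ref{lem:phase_without_OTP}) enters. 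The collapsing of $R^j$ to a single joint entropy and the averaging over syndromes are then routine consequences of the chain rule and concentration.
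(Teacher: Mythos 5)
Your proposal matches the paper's own derivation essentially step for step: the per-syndrome accounting of bit-error-correction cost $I^j_b$ and phase-error (privacy-amplification) cost $I^j_{p,i}$ via Lemma~\ref{lem:error_space}, the chain-rule collapse to $R^j = h(\{q^{j,j'}_{i,i'}/q^j\}_{i,j',i'})$ in Eq.~\eqref{eq:tot_consumption_OTP}, the keep/discard threshold giving $r^j$, the averaging over syndromes, and the flat $(n-k)/n$ OTP overhead, with the decoupling of bit and phase correction under EPR-pair ancillas (Lemma~\ref{lem:bp_decouple}) and the CRO reduction playing the same supporting roles. The argument is correct and no genuinely different route is taken.
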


We stress that for the six-state protocol, we can calculate the probability of error patterns based on the bit and phase error rates, thereby determining the key rate $r$. However, for the BB84 protocol, a free parameter exists. Therefore, minimizing the key rate $r$ across all valid free parameters is necessary.

To enhance the key rate formula outlined in Theorem \ref{thm:kr_OTP}, we can compress the bit error syndrome during two-way communication with OTP encryption, leveraging Lemma \ref{lem:error_space} once more. This compression reduces the consumption of secret keys for OTP to the entropy of bit error syndromes $h(\{q^j\}_j)$. Such compression enables a more efficient use of OTP, leading to an improved key rate.

\begin{theorem}[Key rate formula for advantage distillation with OTP encryption and bit error syndrome hashing]\label{thm:kr_OTP_hash}
Suppose Alice and Bob perform two-way post-processing with hashing of the tag and one-time pad encryption in the advantage distillation framework using an $[n\ k\ d]$ linear code. The probabilities of getting bit error syndrome $s^j_b$, bit error pattern $e^j_{b,i}$, phase error syndrome $s_p^{j'}$ and phase error pattern $e^{j'}_{p,i'}$ are given in Table \ref{tab:Notation}. Then, the key rate is
\begin{equation}\label{eq:key_rate_OTP_hash}
\begin{split}
	r &= \sum_j q^jr^j  - \frac{h(\{q^j\}_j)}{n},\\
	r^j&= \max\{1-\frac{1}{n}h(\{\frac{q^{j,j'}_{i,i'}}{q^j}\}_{i,j',i'}), 0\}.
\end{split}
\end{equation}
\end{theorem}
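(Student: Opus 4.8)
\textbf{Proof proposal for Theorem \ref{thm:kr_OTP_hash}.}

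The plan is to reuse essentially all of the derivation behind Theorem \ref{thm:kr_OTP}, changing only the accounting of the secret-key cost incurred by the one-time-pad step. Recall from the analysis preceding Theorem \ref{thm:kr_OTP} that, conditioned on having obtained bit error syndrome $s_b^j$ in a group, the one-way error correction on that group consumes an asymptotic number of secret/ancillary bits equal to $R^j = n\,r^j_{\mathrm{cost}}$ where, by Eq.~\eqref{eq:tot_consumption_OTP}, $\tfrac{1}{n}R^j = \tfrac{1}{n}h(\{q^{j,j'}_{i,i'}/q^j\}_{i,j',i'})$; consequently a retained group yields $n - R^j$ secure bits when this quantity is positive and is discarded otherwise, giving the per-group contribution $r^j = \max\{1 - \tfrac1n h(\{q^{j,j'}_{i,i'}/q^j\}_{i,j',i'}),0\}$. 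This part of the argument is untouched: the hashing of the raw keys, the decision rule, information reconciliation, and privacy amplification are identical, so the total secure-key yield before subtracting the OTP cost is still $\sum_j q^j r^j$.

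The only new ingredient is replacing the naive OTP cost $n-k$ per group with the compressed cost. First I would observe that the bit error syndrome $s_b^j$ produced in each group is, across the many groups, an i.i.d.\ random variable on the $2^{n-k}$ possible syndromes with distribution $\{q^j\}_j$. Alice transmits this syndrome to Bob, and to keep it hidden from Eve it must be one-time-padded; but the string that actually needs to be sent is not the raw $(n-k)$-bit syndrome but any encoding of the \emph{sequence} of syndromes. Applying Lemma~\ref{lem:error_space} (equivalently, the source-coding bound behind Eq.~\eqref{eq:typical_set}) to the sequence of syndromes viewed as a source, the sequence over $m$ groups can be compressed to $m\,(h(\{q^j\}_j)+\varepsilon)$ bits with failure probability vanishing in $m$; this compressed string is what gets encrypted with OTP, so the secret-key consumption per group is $h(\{q^j\}_j)/n + o(1)$ in the asymptotic limit. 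Subtracting this from $\sum_j q^j r^j$ gives exactly Eq.~\eqref{eq:key_rate_OTP_hash}.

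There are two places that need a little care. The first is a composability/ordering point: the syndrome must be revealed to Bob \emph{before} he can perform the error correction in his group, yet the compression is over the whole block of groups; I would handle this by noting that error estimation and the choice of which groups to keep are already treated as block-level operations (as emphasized after Eq.~\eqref{eq:bp_errorN} and in Box~\ref{box:ad_protocol}), so performing the syndrome-hashing and OTP transmission as a single block-level step, then doing all per-group IR/PA afterwards, is legitimate and changes nothing in the downstream analysis — all operations involved remain CROs, so the reduction to a prepare-and-measure protocol in Sec.~\ref{sec:one-way-sec} still goes through. The second is that the $r^j$ are unchanged: the compression affects only \emph{how} the syndrome is communicated, not \emph{what} information about bit and phase errors is available once a group is retained, so the per-group key rate formula is copied verbatim from Theorem~\ref{thm:kr_OTP}. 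I expect the main obstacle to be stating the block-level compression cleanly enough that the interleaving of a block-level OTP step with per-group distillation is manifestly still a valid CRO-based protocol; once that is granted, the key-rate bookkeeping is a one-line substitution $\tfrac{n-k}{n}\mapsto \tfrac{h(\{q^j\}_j)}{n}$ justified by $h(\{q^j\}_j)\le \log 2^{n-k}=n-k$.
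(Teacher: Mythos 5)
Your proposal is correct and follows essentially the same route as the paper, which justifies Theorem~\ref{thm:kr_OTP_hash} exactly by keeping the per-group rates $r^j$ from Theorem~\ref{thm:kr_OTP} and invoking Lemma~\ref{lem:error_space} once more to compress the syndrome communication, replacing the OTP cost $\frac{n-k}{n}$ by $\frac{h(\{q^j\}_j)}{n}$. The only phrasing to tighten is that Alice cannot compress ``the sequence of syndromes'' directly (she knows only her own tags, which are essentially uniform); the compression must be the distributed, Slepian--Wolf-type hashing of her tags across groups so that Bob recovers the i.i.d.\ syndromes at rate $h(\{q^j\}_j)$ --- which is precisely what your citation of Lemma~\ref{lem:error_space} supplies.
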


In Theorem \ref{thm:kr_OTP} and Theorem \ref{thm:kr_OTP_hash}, OTP encryption is employed during information reconciliation, consuming secret keys. However, as shown in the next section (Lemma \ref{lem:phase_without_OTP}), omitting OTP encryption enhances the key rate. Building on this observation, we further enhance the key rate in Theorem \ref{thm:kr_OTP} with the improvements outlined in Theorem \ref{thm:kr_noOTP}.

We remark that another information reconciliation method, involving directly measuring a portion of the qubits, yields a slight improvement over the results obtained from Theorem \ref{thm:kr_OTP_hash}. However, for brevity, we do not present the key rate formula in the main text. A detailed discussion is available in Appendix~\ref{app:another_IR}. 

\subsection{Without OTP encryption}\label{sec:key_rate_noOTP}
The following section illustrates how omitting OTP encryption enhances the key rate. The key concept involves understanding the impact on the phase error when Alice and Bob do not encrypt their message while identifying the bit error syndrome. This is equivalent to utilizing $\ket{00}$ as ancillary qubits instead of employing EPR pairs to gather parity information about the noisy EPR pairs. We begin by examining the case where $H$ is the hashing matrix of the $[7\ 4\ 3]$ code:
\begin{equation}
    H = \begin{pmatrix}
        1 & 0 & 0 & 1 & 1 & 0 & 1 \\
        0 & 1 & 0 & 1 & 0 & 1 & 1 \\
        0 & 0 & 1 & 0 & 1 & 1 & 1
    \end{pmatrix}.
\end{equation}

The process of hashing and measuring the bit error syndrome is depicted in Fig.~\ref{fig:phase_error}. Here, $\ket{00}$ is the ancillary qubit. CNOT gates are applied between data qubits and the ancillary qubits, according to the hashing matrix $H$. Subsequently, the ancillary qubits are measured to extract parity information. This operation does not affect the bit error pattern. We transform the quantum circuit using Hadamard gates to analyze the effect on the phase error pattern. The equivalent circuit is shown in Fig.~\ref{fig:phase_error}, where the ancillary qubits become the control qubits of CNOT gates. 
Each ancillary qubit pair, $\ket{00}$, is a superposition of Bell states: $\ket{00} = \frac{1}{\sqrt{2}} (\ket{\psi_{00}} + \ket{\psi_{01}})$. 
Since the non-Bell-diagonal terms do not affect the distillation process (see Appendix \ref{app:non-Bell-diagonal}), this pair can be treated as passing through a dephasing channel in the Bell basis, resulting in an equal mixture of Bell states $\ket{\psi_{00}}$ and $\ket{\psi_{01}}$.
If the ancillary qubit pair is in the state $\ket{\psi_{01}}$, the CNOT gate will alter the phase error pattern. 
Define a bit string $a$ of length $n-k$ with $a_i = 1$ if the $i$-th ancillary qubits are $\ket{\psi_{01}}$. The bit string $a$ follows a uniform distribution on $\{0,1\}^{n-k}$. This bit string will lead to a change in the phase error by the CNOT gate, as illustrated in Fig.~\ref{fig:phase_error}. After measuring the ancilla qubits, the phase error pattern becomes $\tilde{e}_P = e_P + H^Ta$. This relationship can be easily extended to an arbitrary linear hashing matrix, leading to the following lemma.

\begin{figure}
	\begin{quantikz}[row sep=0.3cm, column sep=0.3cm]
		\lstick[7]{noisy \\EPR pairs}& \ctrl{7} &&&&&&&  \\
		&& \ctrl{7} &&&&&&  \\
		&&& \ctrl{7} &&&&& \\
		&&&& \ctrl{5} &&&& \\
		&&&&& \ctrl{5} &&& \\
		&&&&&& \ctrl{4} &&  \\
		&&&&&&& \ctrl{3} &  \\
		\lstick{$\ket{0}$}&\targ{}&&& \targ{} &\targ{}&& \targ{} &\meter{}  \\
		\lstick{$\ket{0}$}&& \targ{} && \targ{} && \targ{} &\targ{}&\meter{}  \\
		\lstick{$\ket{0}$} &&& \targ{} && \targ{} & \targ{} & \targ{} &\meter{}  \\
	\end{quantikz}   \raisebox{0.7\height}{\scalebox{2}{$=$}} 
	\begin{quantikz}[row sep=0.3cm, column sep=0.3cm]
		\lstick[7]{noisy \\EPR pairs}\slice{(a)}&\gate{H} \slice{(b)}& \targ{} &&\slice{(c)}& \gate{H} \slice{(d)}&&  \\
		&\gate{H} && \targ{} && \gate{H} & & \\
		&\gate{H} &&& \targ{} & \gate{H} & &\\
		&\gate{H} &\targ{} & \targ{} && \gate{H} && \\
		&\gate{H} &\targ{} && \targ{} & \gate{H} & & \\
		&\gate{H} & & \targ{} & \targ{} & \gate{H} & & \\
		&\gate{H} & \targ{} &\targ{}& \targ{} & \gate{H} &&  \\
		\lstick{$\ket{0}$}& \gate{H}&\ctrl{-7}&&& \gate{H} &&\meter{}  \\
		\lstick{$\ket{0}$}& \gate{H}&& \ctrl{-7} && \gate{H} &&\meter{} \\
		\lstick{$\ket{0}$}& \gate{H}&&& \ctrl{-7} & \gate{H} &&\meter{}  
	\end{quantikz}
	\caption{Analysis of the phase error pattern when $\ket{00}$s are employed as ancillary qubits, using the $[7\ 4\ 3]$ code as an example. Left: Alice and Bob apply CNOT gates from noisy EPR pairs to ancillary qubits $\ket{00}$. After measuring the ancillary qubits and engaging in two-way communication (not depicted), they determine the bit error syndrome for this group.  Right: Equivalence of the quantum circuit. (a) The phase error pattern is $e_p$. (b) After the Hadamard transform, the phase error pattern $e_p$ transforms into the bit error pattern. (c) The $\ket{00}$ ancillary qubit pairs are the control qubits of CNOT gates to the EPR pairs. As discussed in the main text, $\ket{00}$ can be treated as an equal mixture of $\ket{\psi_{00}}$ and $\ket{\psi_{01}}$. The phase error $e_p$ is converted to $\tilde{e}_p = e_p + H^T a$, where $a$ is uniformly random on $\{0,1\}^{n-k}$. (d) The bit error pattern $\tilde{e}_p$ becomes the phase error pattern.}
	\label{fig:phase_error}
\end{figure}
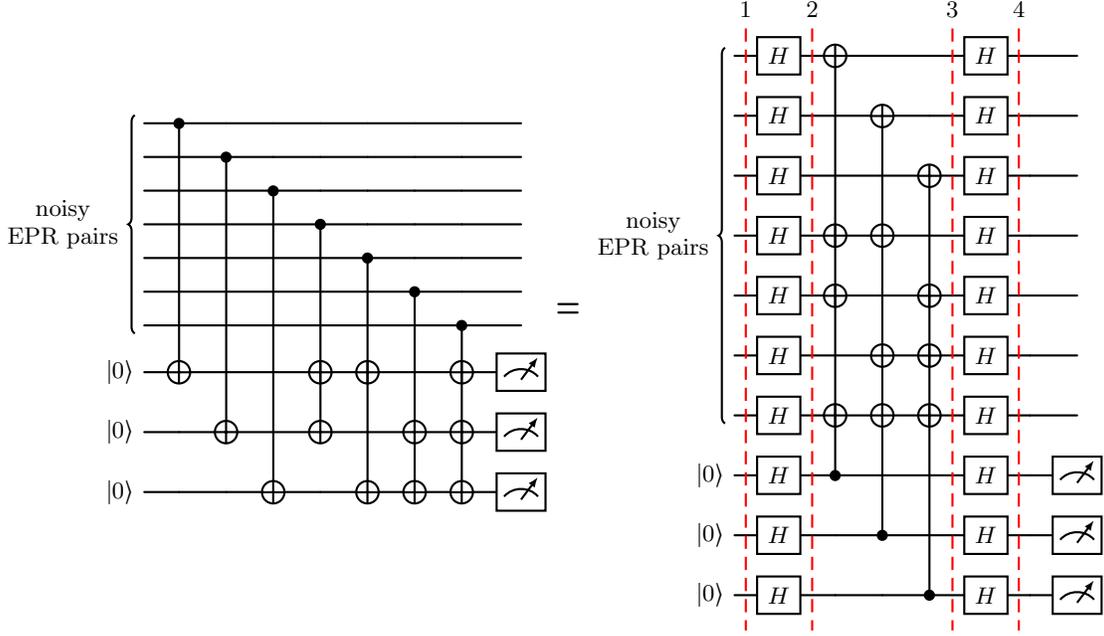

\begin{lemma}[Transformation of the phase error pattern without OTP encryption]\label{lem:phase_error_evolution}
    Assuming Alice and Bob hold $n$ noisy EPR pairs with the original phase error pattern denoted as $e_{p}$. They perform the same linear hash using an $(n-k) \times n$ parity check matrix $H$ between the EPR pairs and ancillary qubits $\ket{00}^{\otimes (n-k)}$. EPR pairs serve as control qubits, while ancillary qubits act as target qubits. If the ancillary qubits are measured, and the results are publicly announced without OTP encryption, the phase error pattern transforms to:
    \begin{equation}\label{eq:ph_err_enlarge}
    \begin{split}
    	\tilde{e}_P = e_P + H^T a,
    \end{split}
    \end{equation}
    where $a$ is uniformly distributed on $\{0,1\}^{n-k}$.
\end{lemma}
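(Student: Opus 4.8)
The plan is to track the phase error pattern through the equivalent circuit appearing on the right-hand side of Fig.~\ref{fig:phase_error}, generalizing the bookkeeping from a single CNOT to the full hash matrix $H$. First I would set up the Heisenberg/Bell-picture description: by Appendix~\ref{app:non-Bell-diagonal} the off-diagonal terms are irrelevant, so I may assume the $n$ noisy EPR pairs are in a fixed Bell state $\bigotimes_{\ell=1}^n \ket{\psi_{a_\ell b_\ell}}$ with bit error pattern $e_b = a_1\cdots a_n$ and phase error pattern $e_p = b_1\cdots b_n$, and the $n-k$ ancillary pairs $\ket{00}^{\otimes(n-k)}$ each decompose as $\ket{00} = \tfrac{1}{\sqrt{2}}(\ket{\psi_{00}} + \ket{\psi_{01}})$. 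Again invoking that non-Bell-diagonal terms do not matter, I treat each ancilla pair as an equal classical mixture of $\ket{\psi_{00}}$ and $\ket{\psi_{01}}$, i.e.\ I introduce an auxiliary bit $a_i \in \{0,1\}$ ($i = 1,\dots,n-k$), uniform and independent, so that ancilla pair $i$ is in state $\ket{\psi_{0\,a_i}}$.

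The core computation is then to apply the bilateral CNOT identity \eqref{eq:bilateralCNOT} repeatedly. The hash circuit applies, for each $1$ in entry $(i,\ell)$ of $H$, a bilateral CNOT with the $\ell$-th EPR pair as control and the $i$-th ancilla pair as target. From \eqref{eq:bilateralCNOT}, a bilateral CNOT with control in state $\ket{\psi_{a,b}}$ and target in state $\ket{\psi_{c,d}}$ maps these to $\ket{\psi_{a,b\oplus d}}$ and $\ket{\psi_{a\oplus c,d}}$: the control's bit index is unchanged, its phase index picks up the target's phase index, the target's bit index picks up the control's bit index, and the target's phase index is unchanged. Since every ancilla target has phase index fixed to $a_i$ throughout (its phase index never changes), the cumulative effect on the $\ell$-th EPR pair's phase index is $b_\ell \mapsto b_\ell \oplus \bigoplus_{i : H_{i\ell} = 1} a_i = b_\ell \oplus (H^T a)_\ell$. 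Assembling over all $\ell$ gives exactly $\tilde e_p = e_p + H^T a$. I would also note in passing (or defer to Lemma~\ref{lem:bp_decouple}'s logic) that the bit error pattern $e_b$ is untouched because the control bit indices never change, consistent with the claim that the hashing "does not affect the bit error pattern."

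The remaining point is the distribution of $a$: each ancilla pair independently is $\ket{\psi_{00}}$ or $\ket{\psi_{01}}$ with probability $1/2$, so $a$ is uniform on $\{0,1\}^{n-k}$, and after the ancilla qubits are measured and announced without OTP encryption, this randomness is not erased — hence it genuinely enlarges the phase error pattern to a coset-like mixture $e_p + H^T a$. I expect the main obstacle to be purely expository rather than mathematical: making rigorous the "treat $\ket{00}$ as a mixture of $\ket{\psi_{00}}$ and $\ket{\psi_{01}}$" step, i.e.\ justifying that in computing the statistics of bit and phase error patterns one may replace the coherent superposition by the dephased mixture. This needs the result of Appendix~\ref{app:non-Bell-diagonal} applied to the joint EPR-plus-ancilla system, together with the observation that the subsequent steps (measurement of ancillas, one-way error correction, privacy amplification) are all diagonal in the relevant bases, so coherences between the $a$-sectors never contribute. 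Once that reduction is granted, the algebra is the routine iteration of \eqref{eq:bilateralCNOT} sketched above.
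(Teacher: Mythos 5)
Your proposal is correct and follows essentially the same route as the paper's own proof: treat each ancilla pair $\ket{00}$ as an equal (dephased) mixture $\ket{\psi_{0,a_i}}$ with $a$ uniform on $\{0,1\}^{n-k}$, then iterate the bilateral CNOT identity of Eq.~\eqref{eq:bilateralCNOT} to accumulate the phase-index shifts into $\tilde{e}_p = e_p + H^T a$. The only difference is that you make the dephasing justification (via Appendix~\ref{app:non-Bell-diagonal}) and the invariance of the ancilla's phase index explicit, which the paper leaves implicit in its preceding discussion.
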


\begin{proof}	
	As discussed above, the ancillary qubits can be treated as an equal mixture of $\ket{\psi_{00}}$ and $\ket{\psi_{01}}$. These ancilla qubits can be represented as $\otimes_{i=1}^{n-k} \ket{\psi_{0,a_i}}$, where $a = (a_1, a_2, \dots, a_{n-k})$ is uniformly distributed over $\{0, 1\}^{n-k}$. 
	
	Suppose the initial phase error pattern is given by $e_p$ and $H_{ij} = 1$. In this scenario, a bilateral CNOT operation is performed between the $i$-th ancillary qubit pair and the $j$-th noisy EPR pair. 
	
	According to Eq.~\eqref{eq:bilateralCNOT}, the bilateral CNOT operation updates the phase error pattern from $e_p$ to $e_p + \hat{v}(j, a_i)$, where $\hat{v}(j, a_i)$ is a vector that has all entries set to zero except for position $j$, which is set to $a_i$. 
	
	After all the bilateral CNOT operations have been performed, the resulting phase error pattern is updated to $e_p + H^T a$.
\end{proof}

A direct implication of Eq.~\eqref{eq:ph_err_enlarge} is that the key rate without OTP is at least as large as the case with OTP, making the use of OTP encryption for bit error correction unnecessary. This observation is summarized in the following lemma.

\begin{lemma}[Unneccessity of OTP encryption]\label{lem:phase_without_OTP}
Assuming Alice and Bob hold $n$ noisy EPR pairs and perform bit error correction using a $(n-k) \times n$ linear hashing matrix $H$ with ancillary qubits $\ket{00}^{\otimes (n-k)}$, the additional consumption of perfect EPR pairs to correct phase error is no more than $(n-k)$ compared to using $\ket{\psi_{00}}^{\otimes (n-k)}$ as ancillary qubits.
\end{lemma}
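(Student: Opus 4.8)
The plan is to compare the two protocols stage by stage and reduce the statement to a one-line entropy inequality. First I would note that the bit-error-correction stage is literally the same in both cases: the hashing CNOTs act on the noisy EPR pairs in exactly the same way regardless of whether the targets are $\ket{00}^{\otimes(n-k)}$ or $\ket{\psi_{00}}^{\otimes(n-k)}$, so the extracted bit error syndrome $s_b = He_b$, the bit error pattern $e_b$, and hence (by Lemma~\ref{lem:error_space}) the number of perfect EPR pairs spent identifying and correcting $e_b$, are identical. The only difference is the phase error pattern that enters the subsequent phase-error-correction (privacy-amplification) stage: with $\ket{\psi_{00}}$ ancillas it is still $e_p$ by Lemma~\ref{lem:bp_decouple}, whereas with $\ket{00}$ ancillas it is $\tilde e_p = e_p + H^T a$ by Lemma~\ref{lem:phase_error_evolution}, where $a$ is uniform on $\{0,1\}^{n-k}$ and, being the fictitious Bell-phase label of the freshly prepared ancilla qubits, is independent of everything Alice and Bob hold when the phase stage begins, in particular of $e_p$ and of the already-identified bit-error data $W$ (the bit error pattern together with its syndrome), which is the same in the two protocols.

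Next I would invoke Lemma~\ref{lem:error_space}, exactly as in Eq.~\eqref{eq:ph_consumption_OTP}, to write the perfect-EPR consumption of phase-error correction as the entropy of the phase error pattern conditioned on $W$: $h(e_p\mid W)$ in the $\ket{\psi_{00}}$ case and $h(\tilde e_p\mid W)$ in the $\ket{00}$ case. The lemma therefore reduces to $h(\tilde e_p\mid W)\le h(e_p\mid W)+(n-k)$. I would prove this by the chain rule for the entropy function: $h(\tilde e_p\mid W)\le h(\tilde e_p,a\mid W)=h(a\mid W)+h(\tilde e_p\mid a,W)$, where $h(a\mid W)=n-k$ since $a$ is uniform on $\{0,1\}^{n-k}$ and independent of $W$, and $h(\tilde e_p\mid a,W)=h(e_p\mid a,W)=h(e_p\mid W)$ because for each fixed $a$ the map $e_p\mapsto e_p+H^T a$ is a translation (hence a bijection) on $\{0,1\}^n$ and $a$ is independent of $(e_p,W)$. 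Combining gives the claim. (Equivalently, one can argue at the level of $\varepsilon$-smallest probable sets: the set $\{e+H^T a : e\in\mathcal{T}^{\varepsilon}_{e_p},\, a\in\{0,1\}^{n-k}\}$ contains $\tilde e_p$ with probability at least $1-\varepsilon$ and has size at most $2^{n-k}\,\lvert\mathcal{T}^{\varepsilon}_{e_p}\rvert$, so $\log\lvert\mathcal{T}^{\varepsilon}_{\tilde e_p}\rvert\le (n-k)+\log\lvert\mathcal{T}^{\varepsilon}_{e_p}\rvert$.)

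The main point that requires care — and essentially the only obstacle — is justifying that $a$ is genuinely independent of the bit-error side information $W$ and is not leaked by any measurement in the protocol, since this is exactly what makes the chain-rule manipulation legitimate. I would settle it by tracking the bilateral CNOTs with Eq.~\eqref{eq:bilateralCNOT}: with the noisy EPR pair as control and an ancilla pair $\ket{\psi_{0,a_i}}$ as target, each CNOT leaves the target's phase label $a_i$ untouched while writing the EPR pair's bit-error bit into the target's bit-error slot, so after all the hashing CNOTs the ancilla bit-error bits carry the syndrome $s_b$ — which the computational-basis measurement reveals — whereas the phase labels $a$ are untouched and never measured. Hence $a$ stays uniform and independent of $(e_b,e_p)$, and the computation above goes through; everything else is a routine reuse of the already-established lemmas.
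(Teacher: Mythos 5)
Your proposal is correct, and its primary argument takes a different route from the paper. The paper's proof works entirely at the level of $\varepsilon$-smallest probable sets: it forms $\mathcal{T}' = \{e + H^T a : e \in \mathcal{T},\ a \in \{0,1\}^{n-k}\}$, notes $|\mathcal{T}'| \le 2^{n-k}|\mathcal{T}|$ and $\Pr[\tilde e_p \in \mathcal{T}'] \ge 1-\varepsilon$ via Lemma~\ref{lem:phase_error_evolution}, and then invokes Lemma~\ref{lem:error_space} to bound the extra consumption by $\log|\mathcal{T}''| - \log|\mathcal{T}| \le n-k$ --- exactly the argument you relegate to your parenthetical remark. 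Your main line instead identifies the phase-correction cost with the conditional entropy $h(\cdot\mid W)$ (consistent with how the paper itself uses Lemma~\ref{lem:error_space} in Eq.~\eqref{eq:ph_consumption_OTP}) and gets the bound from the chain rule, $h(\tilde e_p\mid W)\le h(a\mid W)+h(\tilde e_p\mid a,W)=(n-k)+h(e_p\mid W)$, using that $e_p\mapsto e_p+H^Ta$ is a bijection for fixed $a$ and that $a$ is uniform and independent of $(e_p,W)$. The two routes buy slightly different things: the entropy argument is a clean one-line inequality but implicitly leans on the asymptotic i.i.d.-across-groups setting in which consumption equals conditional entropy, whereas the paper's set-counting argument applies directly to $\log|\mathcal{T}^{\varepsilon}|$ as stated in Lemma~\ref{lem:error_space} and is therefore marginally more general. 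Your extra verification via Eq.~\eqref{eq:bilateralCNOT} that the hashing CNOTs and the computational-basis measurement of the ancillas never reveal the phase label $a$ --- so that $a$ remains uniform and independent of the bit-error side information --- is a point the paper simply absorbs into the statement of Lemma~\ref{lem:phase_error_evolution}, and making it explicit strengthens rather than weakens the argument.
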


\begin{proof}
	The key idea is to track the $\varepsilon$-probable set of phase error patterns and show that it expands by at most a factor of $2^{n - k}$ after hashing. According to Lemma \ref{lem:error_space}, this implies that the additional consumption of secret keys for phase error correction in each group will not exceed $n - k$.	
	
	Specifically, let $\mathcal{T}$ denote the $\varepsilon$-probable set of phase error patterns before hashing. Consider the set
	\begin{equation}
		\mathcal{T}' = \{ e + H^\top a \mid e \in \mathcal{T},\ a \in \{0,1\}^{n - k} \},
	\end{equation}
	which satisfies $|\mathcal{T}'| \le 2^{n - k} |\mathcal{T}|$.	According to Lemma \ref{lem:phase_error_evolution}, $\mathcal{T}'$ contains all possible phase error patterns after applying the hashing process, given that the original phase error pattern is in $\mathcal{T}$. From the definition of $\mathcal{T}$, it follows that
	\begin{equation}
		\Pr[e_p \in \mathcal{T}'] \ge 1 - \varepsilon.
	\end{equation}
	Let $\mathcal{T}''$ denote the $\varepsilon$-probable set of phase error patterns after hashing. Since $\mathcal{T}''$ is the smallest probable set, we have $|\mathcal{T}''| \le |\mathcal{T}'|$. By Lemma \ref{lem:error_space}, the additional consumption of secret keys for phase error correction increases by at most
	\begin{equation}
		\log |\mathcal{T}''| - \log |\mathcal{T}| \le n - k.
	\end{equation}
\end{proof}

Note that $n - k$ represents the consumption for OTP encryption. Moreover, we can take $H$ as the hashing matrix in information reconcilliation. Thus, we have proved that the additional consumption for phase error correction when performing bit error correction without OTP—whether during preprocessing or information reconciliation—does not surpass the consumption for OTP encryption. As mentioned earlier, phase error correction can be reduced to privacy amplification in QKD, where we do not need to explicitly know the phase error pattern but only need to account for the consumption required for phase error correction. Therefore, one can always perform information reconciliation without OTP and consume additional secret keys during phase error correction. That is, the key rate without using OTP is no less than the key rate with OTP and has the potential to be higher without OTP through careful analysis, especially for two-way preprocessing.

We now analyze the potential savings achievable in the scheme without OTP encryption through a detailed examination of the change in the phase error pattern. Multiplying $G^T$ on both sides of Eq.~\eqref{eq:ph_err_enlarge} and utilizing the fact that $HG = G^TH^T = 0$, we obtain:
\begin{equation}\label{eq:ph_syndrom}
	G^T \tilde{e}_p = G^T e_p.
\end{equation}
This implies that, after two-way post-processing without OTP encryption, the phase error syndrome remains the same, while the probability of obtaining each phase error pattern within the same phase error syndrome becomes equal. Denoting the probability of obtaining bit error pattern $e^{j}_{b,i}$ and phase error pattern $e^{j'}_{p,i'}$ in this scenario as $\tilde{q}^{jj'}_{ii'}$, then:
\begin{equation}
	\tilde{q}^{jj'}_{ii'} = \frac{q^{jj'}_i}{2^{n-k}}.
\end{equation}

The consumption of phase error correction becomes
\begin{equation}\label{eq:ph_consumption_noOTP}
\begin{split}
	\tilde{I}^{j}_{p,i} &= h(\{\frac{\tilde{q}^{jj'}_{ii'}}{q^j_i}\}_{j',i'})	\\
	&=h(\{q^{jj'}_{i}\}_{j'}) + (n-k).
\end{split}
\end{equation}
The total consumption of groups with bit error syndrome $s^j$ becomes:
\begin{equation}\label{eq:tot_consumption_noOTP}
\begin{split}
	\tilde{R}^j &= I_{b}^j + \sum_{i}\frac{q^j_i}{q_j}\tilde{I}_{p,i}^j \\
    &= h(\{\frac{q^j_i}{q^j}\}_{i}) + \sum_{i} \frac{q^j_i}{q^j} h(\{\frac{q^{jj'}_{i}}{q^j_i}\}_{j'})	 + n- k \\
	&= h(\{\frac{q^{jj'}_{i}}{q^j}\}_{j',i}) + n - k.
\end{split}
\end{equation}
Next, Alice and Bob discard the groups whose consumption exceeds the gain. By combining the above, we derive the following key rate formula.

\begin{theorem}[Key rate formula for advantage distillation, without OTP encryption]\label{thm:kr_noOTP}
Suppose Alice and Bob perform two-way post-processing without one-time pad encryption in the advantage distillation framework using an $[n\ k\ d]$ linear code. The probabilities of getting bit error syndrome $s^j$, bit error pattern $e^j_i$, phase error syndrome $s_P^{j'}$ and phase error pattern $e^{j'}_{P,i'}$ are given in Table \ref{tab:Notation}. Then, the key rate is
\begin{equation}
\begin{split}
	r &= \sum_j q^jr^j,\\
	r^j&= \max\{\frac{k}{n}-\frac{1}{n}h(\{\frac{q^{j,j'}_{i}}{q^j}\}_{i,j'}), 0\} .
\end{split}
\end{equation}
\end{theorem}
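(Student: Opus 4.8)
The plan is to assemble the per-group key-consumption count that the text has already set up in Eqs.~\eqref{eq:ph_syndrom}--\eqref{eq:tot_consumption_noOTP}, and then to convert it into a rate per noisy EPR pair by combining the discard rule with the concentration of the syndrome frequencies. The input can be taken to be $\rho^{\otimes N}$ with $\rho$ Bell-diagonal (Sec.~\ref{sec:one-way-sec}), so after partitioning into blocks of $n$ the blocks are i.i.d. First I would recall, from Lemma~\ref{lem:phase_error_evolution}, that performing the $(n-k)\times n$ hash $H$ with $\ket{00}^{\otimes(n-k)}$ ancillas (i.e.\ without OTP) sends the phase pattern $e_p \mapsto e_p + H^{T}a$ with $a$ uniform on $\{0,1\}^{n-k}$; multiplying by $G^{T}$ and using $G^{T}H^{T}=0$ gives Eq.~\eqref{eq:ph_syndrom}, so the phase \emph{syndrome} is unchanged while, within each phase-syndrome class $s_p^{j'}$, the conditional law of the phase pattern becomes uniform, i.e.\ $\tilde q^{jj'}_{ii'}=q^{jj'}_{i}/2^{n-k}$.

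Next I would account for the cost in a group with bit error syndrome $s^j_b$. By Lemma~\ref{lem:error_space}, identifying the bit error pattern inside the class costs $\tilde I^j_b = h(\{q^j_i/q^j\}_i)$ perfect EPR pairs per group asymptotically; conditioned further on the bit pattern $e^j_{b,i}$, Lemma~\ref{lem:error_space} applied to $\tilde e_p$ gives the phase-correction cost $\tilde I^j_{p,i}=h(\{\tilde q^{jj'}_{ii'}/q^j_i\}_{j',i'}) = h(\{q^{jj'}_i/q^j_i\}_{j'})+(n-k)$, the extra $(n-k)$ being exactly the enlargement produced by the preprocessing hash. The chain rule then collapses $\tilde R^j = \tilde I^j_b + \sum_i (q^j_i/q^j)\,\tilde I^j_{p,i}$ to $h(\{q^{jj'}_i/q^j\}_{j',i})+(n-k)$, which is Eq.~\eqref{eq:tot_consumption_noOTP}. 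A group of $n$ noisy EPR pairs thus yields a net of $n-\tilde R^j$ distilled pairs when this is positive and is discarded otherwise; since the keep/discard decision is classically controlled by the publicly announced syndrome it is a CRO, so the procedure still reduces to a prepare-and-measure protocol as in Sec.~\ref{sec:one-way-sec}. Because the groups are i.i.d., the fraction with syndrome $s^j_b$ concentrates at $q^j$, so the distillable (= key) rate is $r=\sum_j q^j \max\{1-\tilde R^j/n,\,0\}$; substituting $\tilde R^j$ and simplifying $1-\tilde R^j/n = k/n - \tfrac{1}{n} h(\{q^{jj'}_i/q^j\}_{i,j'})$ gives the claimed formula, with no separate OTP term since no OTP secret bits are spent.

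The step I expect to be the main obstacle is the phase-error bookkeeping in the second paragraph: one must verify that the $\varepsilon$-smallest probable set of phase patterns grows by \emph{exactly} (not merely at most, which is all Lemma~\ref{lem:phase_without_OTP} gives) a factor $2^{n-k}$ inside each class, i.e.\ that $\{H^{T}a:a\in\{0,1\}^{n-k}\}$ meets a full coset so that the conditional phase entropy genuinely increases by $n-k$, and that the one-way error correction of Lemma~\ref{lem:error_space}, run simultaneously across the many groups sharing a given syndrome, still sees the i.i.d.\ structure it needs after conditioning on that syndrome and after discarding. A related minor point is to confirm that performing the subsequent information reconciliation with perfect-EPR-pair ancillas (so that Lemma~\ref{lem:bp_decouple} applies) causes no further, uncounted phase enlargement, so that $(n-k)$ is the only such contribution.
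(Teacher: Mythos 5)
Your proposal is correct and follows essentially the same route as the paper: Lemma~\ref{lem:phase_error_evolution} together with $G^{T}H^{T}=0$ gives $\tilde{q}^{jj'}_{ii'}=q^{jj'}_{i}/2^{n-k}$, the chain rule collapses the per-group cost to $\tilde{R}^j$ as in Eq.~\eqref{eq:tot_consumption_noOTP}, and the syndrome-controlled discard rule with i.i.d.\ concentration of the group frequencies yields the stated rate. The obstacle you flag is not a genuine gap: since $H$ has full rank, $\{H^{T}a : a\in\{0,1\}^{n-k}\}$ is exactly the kernel of $G^{T}$ (the dual code), so the post-hash conditional distribution is exactly uniform on each phase-syndrome class and the added phase-correction cost is exactly $n-k$, which is the entropy computation you already carry out rather than a separate smallest-probable-set growth argument (and Lemma~\ref{lem:bp_decouple} covers the subsequent reconciliation, as the paper implicitly assumes).
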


Now we compare Eq.~\eqref{eq:tot_consumption_noOTP} with previous consumption $R^j$, given in Eq.~\eqref{eq:tot_consumption_OTP}, plus the consumption of OTP, $O=n-k$. For the group with bit error syndrome $s^j$, the savings of EPR pairs in this group is
\begin{equation}
\begin{split}
	R^j + (n-k) - \tilde{R}^j = \sum_{i,j'} \frac{q^{jj'}_{i}}{q^j_i} h(\{\frac{q^{jj'}_{ii'}}{q^{jj'}_{i}}\}_{i'}).
\end{split}
\end{equation}
This demonstrates that advantage distillation without OTP encryption results in a higher key rate than the scheme with OTP encryption in Theorem \ref{thm:kr_OTP} \cite{oneway_footnote}. However, there is no rigorous guarantee omitting OTP encryption outperforms Theorem \ref{thm:kr_OTP_hash}. 

Lemma \ref{lem:phase_without_OTP} can be directly applied to enhance the key rate in Theorem \ref{thm:kr_OTP_hash}. Specifically, if we compress the syndrome but do not perform OTP encryption during two-way classical communication, the key rate formula will outperform Theorem \ref{thm:kr_OTP_hash}. Watanabe et al. \cite{Watanabe2007keyrate} have previously shown this scheme based on the $[2\ 1\ 2]$ code, deriving the key rate formula using a different method. We compare these approaches in Sec.~\ref{sec:key_rate_improve} and demonstrate that Theorem \ref{thm:kr_noOTP} can outperform theirs in the high error rate regime. The key rate formula for a general error-correcting code involving error syndrome compression and omitting OTP encryption is left for future work.

\subsection{Adding noise combined with classical linear code}\label{sec:add_noise}
Adding noise has been demonstrated to increase the threshold in QKD protocols \cite{PhysRevLett.95.080501, PhysRevLett.98.020502}. In this approach, Alice deliberately introduces noise to her sifted key. While increasing the key rate in the entanglement-based paradigm may seem challenging, a crucial observation is that a private state is sufficient for secret key generation \cite{PhysRevLett.94.160502}. By adding noise, Alice reduces the knowledge of Eve about the phase error pattern. Remarkably, it has been demonstrated that Alice and Bob do not need to correct all the phase error patterns, leading to savings in consumption during the phase error correction step \cite{PhysRevA.54.1869, PhysRevLett.98.020502}. This method has proven effective in simulation results, particularly for large error rates. It can extend the tolerable error rate from $11\%$ to $12.4\%$ for the BB84 protocol and from $12.7\%$ to $14.12\%$ for the six-state protocol \cite{PhysRevLett.95.080501}.

We integrate the adding noise method into the advantage distillation framework. The noise introduced in \cite{PhysRevLett.95.080501, PhysRevLett.98.020502} is $i.i.d.$ for each noisy EPR pair.  However, as the central idea of advantage distillation is identifying the bit error syndrome of each group of EPR pairs and subsequently implementing error correction and privacy amplification, the bit errors of different EPR pairs within a group are correlated. Consequently, introducing $i.i.d.$ noise would alter the bit error syndrome of each group, which may be inappropriate for this scenario. To address this, a natural solution is to introduce structured noise to preserve the bit error syndrome for each group. Thus, Alice needs to add noise patterns to the entire group instead of introducing noise to each EPR pair individually. Appendix \ref{app:add_noise} presents the detailed method for adding structured noise and subsequent analysis of the key rate based on private state distillation.

\section{Case study} \label{sec:application}
This section applies our advantage distillation framework based on classical linear code to several scenarios. Firstly, we demonstrate that the best-known result for tolerable error rate can be directly derived by combining the $[n\ 1\ n]$ code with Theorem \ref{thm:kr_noOTP}. Secondly, we enhance the state-of-the-art key rate in the high error rate regime by utilizing $[n\ 1\ n]$ codes and $[m\ m-1\ 2]$ codes.

\subsection{Derive the threshold using $[n\ 1\ n]$ code}\label{sec:threshold}
We show how our framework offers a systematic approach to determining the tolerable error rate of various QKD protocols. Previously, the Shor-Preskill formula \cite{Shor_simple_2000} establishes an $11\%$ threshold for the BB84 protocol, which is subsequently improved to $12.7\%$ for the six-state protocol by Lo \cite{Lo_proof_2001}. Gottesman and Lo further enhance the BB84 protocol to tolerate up to $17.9\%$, and the six-state protocol to $26.4\%$, by iteratively applying two-way post-processing in both bit error and phase error correction \cite{Gottesman_proof_2003}. Chau improves the threshold to $20\%$ for BB84 and $27.6\%$ for the six-state protocol through two-way post-processing coupled with an adaptive privacy amplification method \cite{Chau_2002_practical}.

In our framework, the best known thresholds—$20\%$ for BB84 and $27.6\%$ for the six-state protocol, as reported in \cite{Chau_2002_practical}—can be achieved through a single round of two-way post-processing using an \([n, 1, n]\) code, as detailed in Appendix \ref{app:thres_n1n}. While the possibility of further improvements remains open, we conjecture that the $[n\ 1\ n]$ code offers the best threshold within our advantage distillation framework based on classical linear code, owing to its maximum code distance. Potential explorations include trying adaptive coding methods in multiple rounds of preprocessing or combining with the adding noise method. These potential enhancements are left for future research.

\subsection{Enhance the key rates using $[n\ 1\ n]$ and $[m\ m-1\ 2]$ codes}\label{sec:key_rate_improve}
In this section, we apply our advantage distillation framework with $[n\ 1\ n]$ and $[m\ m-1\ 2]$ codes, where $n$ and $m$ are less than 9. The protocol steps are outlined in Box.~\ref{box:ad_protocol}. The $[n\ 1\ n]$ code is a natural generalization of the B step \cite{Gottesman_proof_2003, Ma_decoy_2006}. Taking the $[n\ 1\ n]$ code as an example, Alice and Bob first divide the raw key bits into groups, each containing $n$ noisy raw key bits. Subsequently, they apply the $[n\ 1\ n]$ code to obtain the bit error syndrome, deciding whether to discard the group based on the key rate of the syndrome provided in Theorem \ref{thm:kr_noOTP}. Then, information reconciliation and privacy amplification are performed on the remaining raw key bits.

We compare the key rate when both OTP and hashing are applied (Theorem \ref{thm:kr_OTP_hash}) with the key rate when OTP is omitted (Theorem \ref{thm:kr_noOTP}), as illustrated in Fig.~\ref{fig:compare_noOTP_OPT_hash}. The results indicate that hashing the syndrome with OTP encryption is more effective in the low error rate regime. This is reasonable because, in this regime, the entropy of the error syndrome is low, and only a few groups will be discarded. Thus, the advantage of discarding groups is limited. In contrast, the scheme without OTP performs better in the high error rate regime.

\begin{figure}
  \begin{minipage}{0.48\textwidth}
    \centering
    \includegraphics[width=\linewidth]{./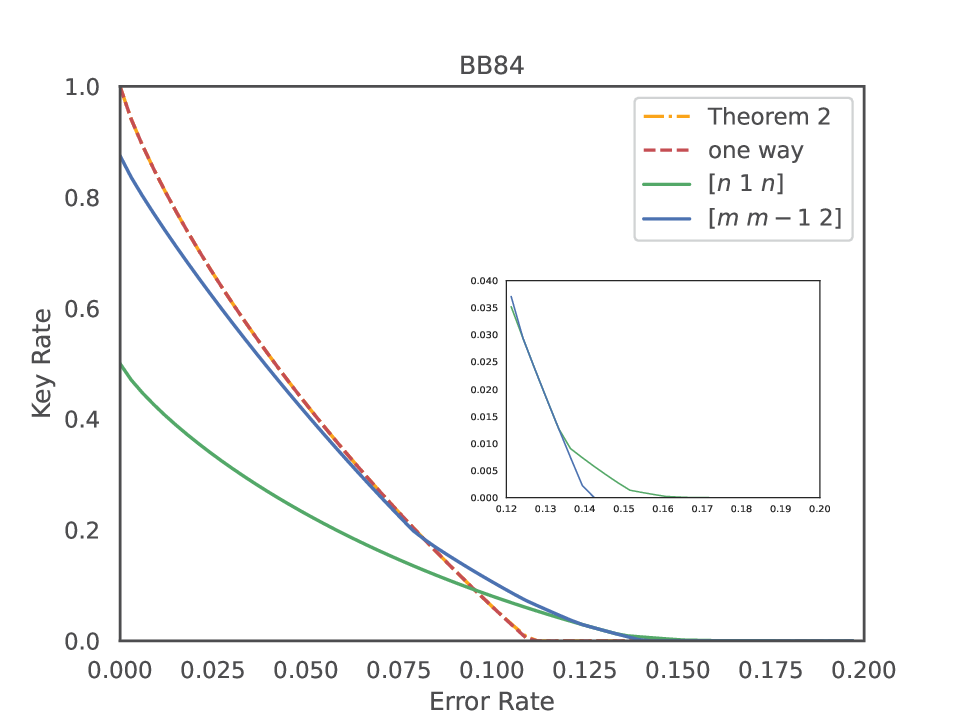}
  \end{minipage}\hfill
  \begin{minipage}{0.48\textwidth}
    \centering
    \includegraphics[width=\linewidth]{./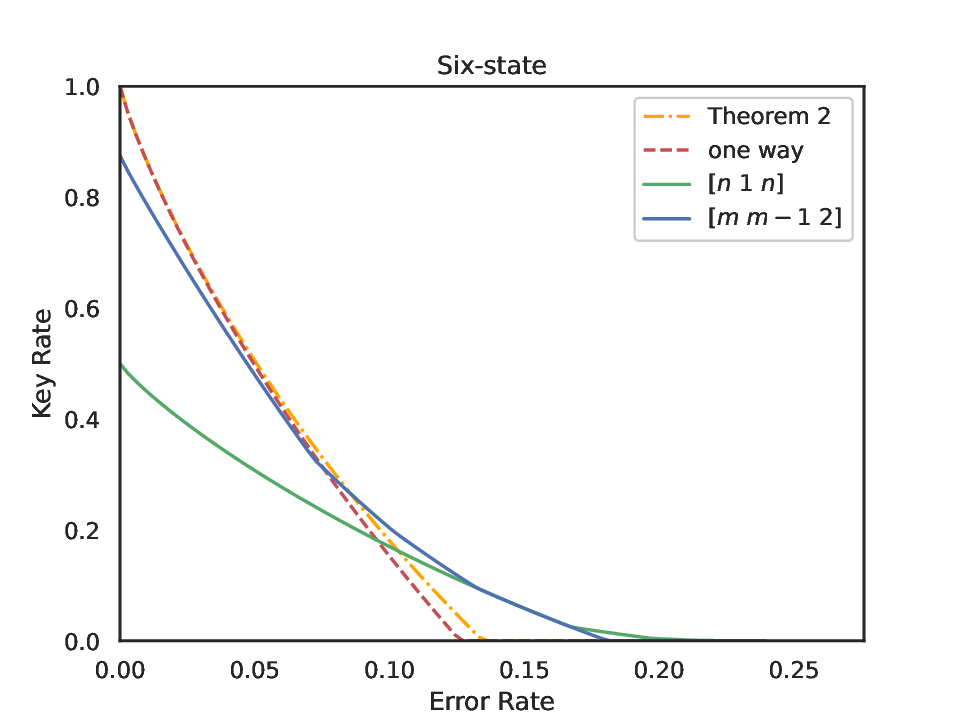}
  \end{minipage}
\caption{Comparison of the key rate when both OTP and hashing are applied (orange line) with the key rate when OTP is omitted (green line and blue line). The key rate is obtained by considering the best results from $[n\ 1\ n]$ codes and $[m\ m-1\ 2]$ codes, where $n$ and $m$ are less than 9. In the low error rate regime, hashing the syndrome with OTP encryption is more effective. Conversely, the scheme without OTP encryption performs better in the high bit error rate regime. In all regimes, our key rate surpasses the one-way protocol composed of information reconciliation and privacy amplification only.}
  \label{fig:compare_noOTP_OPT_hash}
 \end{figure}

Next, we compare our method with previous work. As discussed in Sec.~\ref{sec:key_rate_noOTP}, Lemma \ref{lem:phase_without_OTP} can be directly applied to enhance the key rate in Theorem \ref{thm:kr_OTP_hash}, showing that communication without OTP consistently outperforms communication with OTP. Watanabe et al. \cite{Watanabe2007keyrate} combined this approach with a $[2\ 1\ 2]$ code, surpassing previous results and achieving state-of-the-art performance in low error rate regimes. 
Moreover, in high error rate regimes, the method in \cite{Renner_security_2008} exhibits currently the highest key rate, as far as we know.  Note that the method in \cite{Renner_security_2008} is incorporated into the $[n\ 1\ n]$ codes: In the preprocessing step of these schemes, Alice chooses a block length $n$ and uses a private random bit $c \in \{0,1\}$ to encrypt the raw key bits $(x_1, x_2, \dots, x_n)$ in a group, resulting in $(x_1 \oplus c, x_2 \oplus c, \dots, x_n \oplus c)$, which are sent to Bob. This is equivalent to sending $(x_1 \oplus c, x_1 \oplus x_2, x_2 \oplus x_3, \dots, x_{n-1} \oplus x_n)$. Since $x_1 \oplus c$ is completely random to Bob, it carries no useful information, and the relevant message becomes $(x_1 \oplus x_2, x_2 \oplus x_3, \dots, x_{n-1} \oplus x_n)$. This is exactly the tag $s_x$ Alice sends in the $[n\ 1\ n]$ code without OTP for advantage distillation. Subsequently, Bob calculates the bit error syndrome $s_b = s_x \oplus s_y$ and checks whether $s_b = 0$ to determine whether to retain this group. Thus, the method in \cite{Renner_security_2008} corresponds to the $[n\ 1\ n]$ code with a specific syndrome $s_b = (0,0,\cdots,0)$.

Our method further enhances the outcomes presented by  \cite{Watanabe2007keyrate}, particularly in the high error rate regime. 
Our result shows that the $[m\ m-1\ 2]$ code can enhance both methods in certain noise regimes (Fig.~\ref{fig:compare_PRA07}).
Table \ref{tab:optimal_code} presents the optimal codes corresponding to different error rate regimes for the BB84 protocol and the six-state protocol, respectively.

\begin{figure}[!h]
	\begin{minipage}{0.48\textwidth}
		\centering
		\includegraphics[width=\linewidth]{./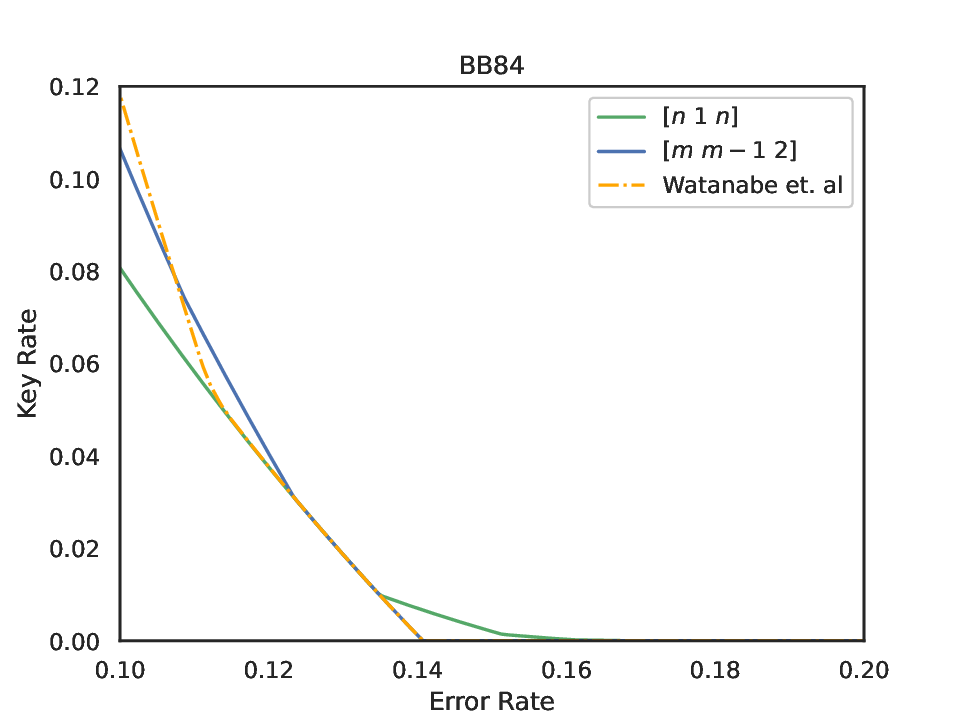}
	\end{minipage}\hfill
	\begin{minipage}{0.48\textwidth}
		\centering
		\includegraphics[width=\linewidth]{./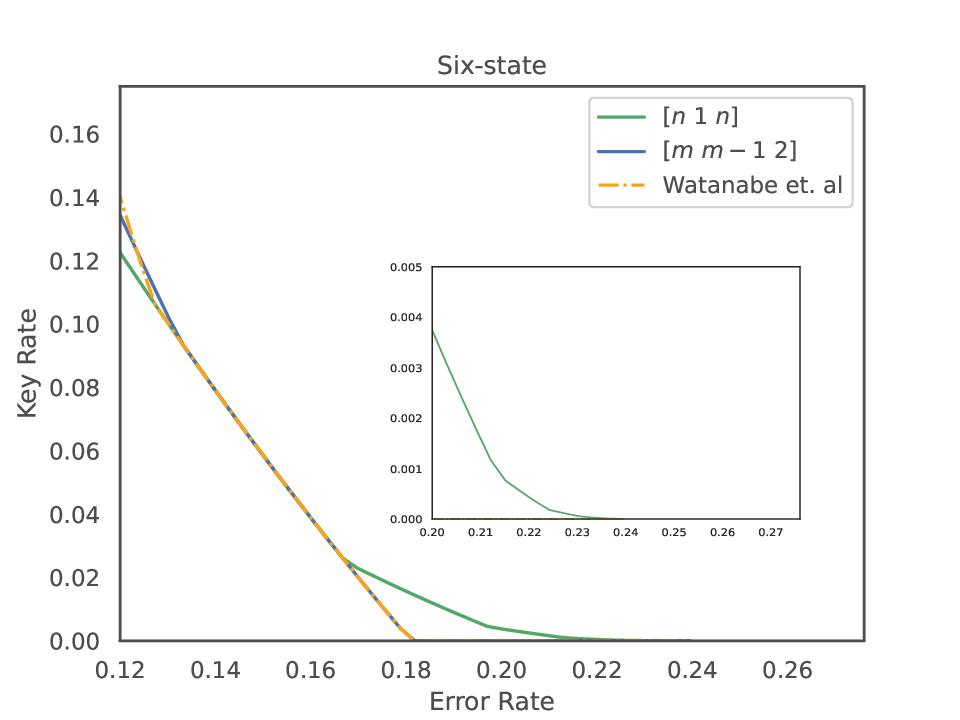}
	\end{minipage}
	\caption{Comparison of the key rate of advantage distillation without OTP (green line and blue line) and the method in \cite{Watanabe2007keyrate} (yellow dotted line). The key rate is obtained by considering the best results from $[n\ 1\ n]$ codes and $[m\ m-1\ 2]$ codes, where $n$ and $m$ are less than 9. The method in \cite{Watanabe2007keyrate} is $[2\ 1\ 2]$ code in advantage distillation framework combined with hashing the syndrome and without using OTP. The results demonstrate that $[m\ m-1\ 2]$ code (specifically, $[3\ 2\ 2]$ code) and $[n\ 1\ n]$ codes for $n \ge 3$ without hashing can improve the key rate of \cite{Watanabe2007keyrate} in the high error rate regime. The optimal codes depicted are given in Table \ref{tab:optimal_code}. }
	\label{fig:compare_PRA07}
\end{figure}
	
\begin{table}[!h] 
	\begin{minipage}{0.45\textwidth}
		\centering
		\begin{tabular}{cc}
			\hline
			Error rate & Optimal code\\\hline
			$10.9\% \sim 12.3\%$ & $[3\ 2\ 2]$ \\\hline
			$13.5\% \sim 15.1\%$ & $[3\ 1\ 3]$ \\\hline
			$15.1\% \sim 16.1\%$ & $[4\ 1\ 4]$ \\\hline
			$16.1\% \sim 16.7\%$ & $[5\ 1\ 5]$ \\\hline
			$16.7\% \sim 17.1\%$ & $[6\ 1\ 6]$ \\\hline
		\end{tabular}
	\end{minipage}
	\hspace{0.05\textwidth}
	\begin{minipage}{0.45\textwidth}
		\centering
		\begin{tabular}{cc}
			\hline
			Error rate & Optimal code\\\hline
			$12.4\% \sim 13.2\%$ & $[3\ 2\ 2]$ \\\hline
			$16.8\% \sim 19.7\%$ & $[3\ 1\ 3]$ \\\hline
			$19.7\% \sim 21.3\%$ & $[4\ 1\ 4]$ \\\hline
			$21.3\% \sim 22.4\%$ & $[5\ 1\ 5]$ \\\hline
			$22.4\% \sim 23.1\%$ & $[6\ 1\ 6]$ \\\hline
		\end{tabular}
	\end{minipage}
	
	\caption{Optimal codes for the BB84 protocol (left) and the six state protocol (right) in Fig.~\ref{fig:compare_PRA07}.}\label{tab:optimal_code} 
\end{table}

It is worth noting that the method in \cite{Watanabe2007keyrate} corresponds to a specific scenario in our framework: the $[2\ 1\ 2]$ code with syndrome compression and without OTP encryption. 
Our results suggest that $[3\ 2\ 2]$ code and $[n\ 1\ n]$ codes for $n \geq 3$ without hashing can improve the key rate compared to the $[2\ 1\ 2]$ code combined with hashing. 
We anticipate further improvement in the key rate by compressing the syndrome without OTP, which has the potential to surpass the results of \cite{Watanabe2007keyrate} in all error rate regimes by leveraging different error-correcting codes. We leave this for future work.

\section{Conclusion and outlook} \label{sec:conclusion}

In this study, we have formulated a comprehensive framework for QKD advantage distillation, generalizing previous work and proposing a novel approach based on classical linear code. Our derivations encompass key rate formulas across various scenarios, highlighting that omitting OTP encryption consistently yields a higher key rate than OTP encryption scenarios. The practical application of our framework involves utilizing $[n\ 1\ n]$ codes to derive the tolerable error rate, which aligns with the current state-of-the-art. Additionally, we employ $[n\ 1\ n]$ and $[m\ m-1\ 2]$ codes to enhance the key rate compared to prior findings systematically. 

Moreover, we extend our investigation by integrating the adding noise method into our framework, resulting in an enhanced key rate formula. This collective effort contributes valuable insights into optimizing QKD protocols and advancing the field.

We anticipate implementing our advantage distillation method across a broader range of QKD protocols. Our method can be directly applied to protocols that utilize entanglement distillation-based security proofs, such as measurement-device-independent QKD protocols \cite{Lo_measurement_2012}. Additionally, for practical QKD systems that typically employ coherent states, our method can be integrated with the decoy-state technique to enhance performance in high-noise regimes \cite{Ma_decoy_2006, Li2022improving, Huang2023sourcereplacement}, thereby extending the maximum secure transmission distance of practical QKD systems. Developing the advantage distillation framework with finite-size analysis is crucial for practical implementation and warrants further investigation. Finally, although this study focuses on discrete-variable quantum key distribution, there are many opportunities to apply our method to enhance the performance of continuous-variable quantum key distribution \cite{Jouguet_2013, Qi_2015_generating}.

There are numerous ways that are promising for further improving our methods. Firstly, while our exploration has focused on some specific codes, there is significant potential to investigate the efficiency of more sophisticated codes. By carefully designing codes, we may enhance threshold values and key rates beyond those achieved by existing protocols. Secondly, opportunities for improvement lie in synergies with additional methods, such as adaptive coding \cite{Chau_2002_practical}. Thirdly, we envisage enhancing Theorem \ref{thm:kr_OTP_hash} by considering scenarios where OTP encryption is not utilized, refining our understanding of key rates under varied conditions. 

Furthermore, while Fig.~\ref{fig:QKDflowchart} encapsulates most post-processing schemes, deviations from this structure are possible, such as altering the order of privacy amplification and information reconciliation \cite{Huang_stream_2022}. Moreover, investigating the implications of our work on the relation between classical and quantum key distillation via CROs is an intriguing direction for future research \cite{christandl2007unifyingclassicalquantumkey}. By enhancing the performance of QKD through advantage distillation protocols, we anticipate advancing both the practicality and security of QKD systems, thereby fostering their increased adoption in real-world applications.

\begin{acknowledgments}
The authors acknowledge Yizhi Huang and Pei Zeng for their helpful discussions. This work was supported by the National Natural Science Foundation of China Grants No.~12174216 and the Innovation Program for Quantum Science and Technology Grant No.~2021ZD0300804 and No.~2021ZD0300702.
\end{acknowledgments}

\appendix

\section{Entanglement distillation for non-Bell-diagonal states} \label{app:non-Bell-diagonal}
In the main text, we focus on entanglement distillation and key distillation protocols for Bell-diagonal noisy EPR states. In this section, we extend our analysis to non-Bell-diagonal states and demonstrate that the same protocols remain effective. Specifically, we prove that the entire entanglement distillation process depends solely on the Bell-diagonal components of the quantum state. The effectiveness of key distillation follows directly from our discussion in Section \ref{sec:prelim}.

For a two-qubit states $\rho$, which can be represented in the Bell basis as
\begin{equation}
	\rho = \begin{pmatrix}
		p_{00} & \times & \times & \times \\
		\times & p_{01} & \times & \times \\ 
		\times & \times & p_{10} & \times \\
		\times & \times & \times & p_{11}
	\end{pmatrix}, 
\end{equation}
where the diagonal elements $p_{00}, p_{01}, p_{10}, p_{11}$ represent the probabilities associated with the respective Bell basis states. The off-diagonal elements, denoted by $\times$, capture the coherence between these states. The Bell-diagonal part of $\rho$, denoted by $\Delta(\rho)$, as the state obtained after dephasing $\rho$ under Bell basis,
\begin{equation}
	\Delta(\rho)= \begin{pmatrix}
		p_{00} & 0 & 0 & 0 \\
		0 & p_{01} & 0 & 0 \\ 
		0 & 0 & p_{10} & 0 \\
		0 & 0 & 0 & p_{11}
	\end{pmatrix}.
\end{equation}

Denote the entire state in the entanglement distillation process, which includes both data and ancillary qubits, as $\sigma^N$. The proof proceeds in two steps.
Firstly, we show that the Bell-diagonal terms of $\sigma^N$ after applying any unitary operation $U$ within the entanglement distillation protocols in our work depend solely on the Bell-diagonal terms prior to the operation. Formally, this is expressed as
\begin{equation}\label{eq:Bell_diag_unitary}
	\Delta^{\otimes N}(U\sigma^NU^{\dagger}) = U \Delta^{\otimes N}(\sigma^N) U^{\dagger}.
\end{equation}
To establish this, observe that the unitary operations involved are limited to bilateral CNOT, $\CNOT_A \otimes \CNOT_B$, bilateral Hadamard gates $H_A \otimes H_B$, and $X_B$ in error correction, which act as permutations on the Bell states \cite{Bennett1996mixedstate}. Thus, the off-diagonal terms do not affect the diagonal terms, thereby proving Eq.~\eqref{eq:Bell_diag_unitary}.

Secondly, we show that the parity comparison and post-selection steps in both two-way and one-way distillation only relate to the diagonal terms. This follows from the properties of classically replaceable operations \cite{Liu2022classically}. Specifically, the parity comparison involves Alice and Bob performing $Z$-basis measurements on the noisy EPR pairs, followed by a classical XOR operation to determine the parity, as illustrated in Fig.~\ref{fig:parity_measurement}.
\begin{figure}[!h]
	\begin{quantikz}
		\lstick[1]{Alice} & \meter{} &  \setwiretype{c} & \ctrl[vertical wire=c]{1} & \\
		\lstick[1]{Bob}& \meter{} & \setwiretype{c} &\targ{} & \rstick[1]{parity}
	\end{quantikz} \scalebox{2}{$\rightarrow$}
	\begin{quantikz}
		\lstick[1]{Alice} & \ctrl{1} &  \meter{} & \setwiretype{c}  \\
		\lstick[1]{Bob} & \targ{} & \meter{} & \setwiretype{c} \rstick[1]{parity}
	\end{quantikz}
	\caption{Parity check process. A classical parity check after computational basis measurements is equivalent to first applying a quantum CNOT operation and then measuring the second qubit. The measurement on the first qubit does not affect the distillation process and can be delayed until the entire entanglement distillation protocol succeeds.}
	
	\label{fig:parity_measurement}
\end{figure}

This operation is equivalent to first performing a CNOT gate followed by a measurement in the computational basis. Notably, only the measurement result of the target qubit affects the subsequent distillation process, allowing the measurement of the source qubit to be delayed. The measurement on the target qubit involves a CNOT operation followed by a computational basis measurement, which is equivalent to measuring the observable $Z_1Z_2$. This observable is diagonal in the Bell basis. Let $\Pi_{0}$ denote the eigenspace of $Z_1Z_2$ corresponding to the $+1$ eigenvalue, and $\Pi_{1}$ denote the eigenspace corresponding to the $-1$ eigenvalue. The probability of obtaining outcome $i$ is given by
\begin{equation}
	\tr(\sigma^N \Pi_i) = \tr(\Delta^{\otimes N}(\sigma^N) \Pi_i).
\end{equation}
Furthermore, the post-measurement state of the remaining qubits retains the same Bell-diagonal terms:
\begin{equation}
	\Delta^{\otimes N}\left(\tr_{AB}(\sigma^N \Pi_i)\right) = \tr_{AB}\left(\Delta^{\otimes N}(\sigma^N) \Pi_i\right),
\end{equation}
where $AB$ represents the two-qubit system depicted in Fig.~\ref{fig:parity_measurement} for the parity check.

Combining the above two steps, we demonstrate that non-Bell-diagonal states consistently preserve their Bell-diagonal terms throughout the entire distillation process. The delayed measurement of the source qubit during the parity check does not affect the distillation outcome, as it has already succeeded. Consequently, both entanglement distillation and key distillation protocols remain effective for non-Bell-diagonal states.

\section{Key rate formula for in-place hashing}\label{app:key_rate}

In the original two-way classical post-processing, hashing is performed without ancilla qubits, constituting an in-place hashing method (see Fig.~\ref{fig:B_step}). Conversely, linear hashing is performed with ancillary qubits in our advantage distillation framework based on classical linear code. It is natural to ask whether this in-place hashing yields a higher key rate. Here, we derive the key rate formula for in-place hashing and demonstrate that both methods produce the same key rate without OTP encryption.

For convenience, we assume the parity check matrix $H$ and generator matrix $G$ are in the following canonical form:
\begin{equation}
\begin{split}
	H &= \begin{bmatrix}
		A & I_{n-k}
	\end{bmatrix}, \\
	G &= \begin{bmatrix}
		I_k \\ A
	\end{bmatrix},
\end{split}
\end{equation}
where $A$ is a $(n-k) \times k$ matrix. Suppose the bit error pattern is $e_b$ and the phase error pattern is $e_p$. We write $e_b$ and $e_p$ as the concatenation of two string of length $k$ and $n-k$: 
\begin{equation}\label{eq:inplace_pattern}
\begin{split}
	e_b &= (e_{b,1}, e_{b,2}),\\
	e_p &= (e_{p,1}, e_{p,2}).
\end{split}
\end{equation}
Then, the bit error syndrome and phase error syndrome can be expressed as: 
\begin{equation}\label{eq:in_place_error}
\begin{split}
	s_b &= Ae_{b,1} + e_{b,2}, \\
	s_p &= e_{p,1} + A^T e_{p,2}.
\end{split}
\end{equation}

Upon knowing $s_b$, there is a one-to-one correspondence between $e_b$ and $e_{b,1}$. Therefore, the consumption of identifying bit error pattern $e_{b,1}$ for the previous $k$ qubits is the same as identifying $e_b$, which is given in Eq.~\eqref{eq:consume_bit_pattern}.

Next, we delve into the consumption of phase error correction. According to Eq.~\eqref{eq:ph_err_enlarge}, the phase error pattern of previous $k$ qubits is affected by the phase error pattern of later $n-k$ qubits by 
\begin{equation}
	\tilde{e}_{p,1} = e_{p,1} + A^Te_{p,2}.
\end{equation}

Comparing with Eq.~\eqref{eq:in_place_error}, we have $\tilde{e}_{p,1} = s_p$. Consequently, the consumption of phase error correction aligns precisely with identifying the phase error syndrome:
\begin{equation}
\begin{split}
	I^j_{p,i} = h(\{\frac{q^{jj'}_i}{q^{j}_i}\}_{j'}).\\
\end{split}
\end{equation}

Combining the consumption of bit and phase error correction, we derive the consumption for the groups with bit error syndrome $s^j$:
\begin{equation}
\begin{split}
	R^j &= h(\{\frac{q^j_i}{q^j}\}_i) + \sum_{i} \frac{q^j_i}{q^j} h(\{\frac{q^{jj'}_i}{q^{j}_i}\}_{j'}) \\
 		&= h(\{\frac{q^{jj'}_i}{q^j}\}_{i,j'}).
\end{split}
\end{equation}

Considering the probability of each bit error syndrome, we get the key rate formula for in-place hashing:
\begin{equation}
\begin{split}
	r &= \sum_j q^j r^j, \\
	r^j &= \max(\frac{k}{n} - \frac{1}{n}h(\{\frac{q^{jj'}_i}{q^j}\}_{i,j'}),0).
\end{split}
\end{equation}
This formula is the same as the one presented in Theorem \ref{thm:kr_noOTP}.

\section{Parity check with OTP} \label{app:another_IR}
We introduce an alternative parity check method for information reconciliation. First, Alice and Bob hash their raw keys to obtain parities. Alice then compresses this parity information and securely transmits it to Bob using OTP, enabling the derivation of the bit error syndrome. Based on this syndrome, they proceed with further key distillation steps, which may include disclosing specific bit values and discarding parts of the raw keys. For instance, in the B-step scenario, if the error syndrome is 1, Alice and Bob each reveal one bit value while retaining the remaining bits for subsequent privacy amplification. 

In a simplified scenario, we assume that either no raw key bits are announced, or the exact bit-error locations are known after announcement. Although partial key announcements and further reconciliation could be considered, this intermediate case is omitted here for simplicity. We derive the key rate formula for this method and demonstrate a marginal improvement over the approach detailed in Theorem \ref{thm:kr_OTP_hash}.

When OTP encryption is applied, the phase error in each qubit is independent.  The conditional probability of a phase error, based on whether a bit error occurs, is
\begin{equation}\label{eq:bp_cond_prob}
	\begin{split}
		\delta_{p0} = \frac{q_{01}}{q_{01}+q_{00}}, \\
		\delta_{p1} = \frac{q_{11}}{q_{10}+q_{11}}.
	\end{split}
\end{equation}
The analysis of the error patterns and syndromes aligns with Appendix \ref{app:key_rate}. Recall from Eq.~\eqref{eq:inplace_pattern} that the bit error pattern is denoted as $e_{b,i}^j = (e_{b,1},e_{b,2})^j_i$. Alice and Bob correct bit errors by measuring the first $k$ qubits and announcing the results to obtain $e_{b,1}$, from which we can deduce $e_b$, given the one-to-one correspondence between $e_b$ and $e_{b,1}$ upon knowing $s_b$. The raw key consumption in this announcement is given by
\begin{equation}
	\tilde{I}_b^j = \frac{k}{n}.
\end{equation}

Upon knowing the bit error pattern $e^j_{b,i}$, Alice and Bob can determine the phase-error probability in each qubit. They can then correct the phase errors using Lemma \ref{lem:error_space}. Denote $l_{i}^j = |e_{b,2}|$, then the phase error consumption is given by
\begin{equation}
	\tilde{I}_{p,i}^j =\frac{n-k-l_i^j}{n}h(\delta_{p0}) + \frac{l_i^j}{n} h(\delta_{p1}).
\end{equation}
and the key rate for bit error syndrome $s^j_b$ is given by 
\begin{equation}
\begin{split}
	r^j &= \max(1 - \tilde{I}_b^j -  \sum_i \frac{q^j_i}{q^j}  \tilde{I}_{p,i}^j,0) \\
	&= \max(\frac{n-k}{n} -  \sum_i \frac{q^j_i}{q^j}  \tilde{I}_{p,i}^j,0).
\end{split}
\end{equation}

Combining Theorem \ref{thm:kr_OTP_hash}, the key rate is obtained by 
\begin{equation}
\begin{split}
	r &= \sum_j q^j r^j  - \frac{h(\{q^j\}_j)}{n},\\
	r^j &= \max\{1-\frac{1}{n}h(\{\frac{q^{j,j'}_{i,i'}}{q^j}\}_{i,j',i'}), \frac{n-k}{n} -  \sum_i \frac{q^j_i}{q^j}  \tilde{I}_{p,i}^j,0 \}.
\end{split}
\end{equation}
This key rate formula offers a slight improvement over Theorem \ref{thm:kr_OTP_hash} \cite{Ma_decoy_2006, Watanabe2007keyrate}, though it does not surpass the results without OTP.

\section{Tolerable error rate for the $[n,1,n]$ code}\label{app:thres_n1n}
In Sec.~\ref{sec:threshold}, we discuss previous results on the tolerable error rate for BB84 and the six-state protocol. Here, we show how to derive these results using our framework combined with $[n\ 1\ n]$ code. For simplicity, we consider the symmetric case. In BB84, the bit error rate equals the phase error rate, i.e., \(\delta_b = \delta_p = q\), and \(\rho = \text{diag}(1-2q+q_{11},q-q_{11},q-q_{11},q_{11})\) with the free parameter \(q_{11}\). For the six-state protocol, $\rho = (1-\frac{3q}{2},\frac{q}{2},\frac{q}{2},\frac{q}{2})$. The conditional probability $\delta_{p,0}, \delta_{p,1}$ of a phase error, based on whether a bit error occurs, is given by Eq.~\eqref{eq:bp_cond_prob}.

According to Theorem \ref{thm:kr_noOTP}, if a bit error syndrome $e^j$ satisfies $r^j > 0$, then a positive key rate is achieved. For $[n\ 1\ n]$ code, $G = (1,1,\cdots,1)^T$. We consider the bit error syndrome $s^0 = (0,0,\cdots,0)$ of length $n-1$. The bit error patterns corresponding to this syndrome are in the code space:
\begin{equation}
\begin{split}
    e^0_{0} = (0,0,\cdots,0),\\
    e^0_1 = (1,1,\cdots,1).
\end{split}    
\end{equation}
and the probabilities for each bit error pattern are
\begin{equation}
\begin{split}
	q^0_0 &= (1-q)^n,\\
	q^0_1 &= q^n.
\end{split}
\end{equation}

There are only two phase error syndromes; one corresponds to phase error patterns with odd parity, and the other corresponds to patterns with even parity. Given the bit error pattern $e^0_0$, the probability of getting even-parity phase error pattern is:
\begin{equation}
\begin{split}
	\frac{q^{00}_0}{q^0_0} &= \sum_{r\ \text{is even}, 0 \le r \le n} \binom{n}{r} \delta_{p0}^{r}(1-\delta_{p0})^{n-r}\\
	&= \sum_{r=0}^n \frac{1 + (-1)^r}{2} \binom{n}{r} \delta_{p0}^{r}(1-\delta_{p0})^{n-r}\\
						   &=\frac{1+(1-2\delta_{p0})^n}{2} .
\end{split}
\end{equation}
The last equality uses the fact that $(1+x)^m = \sum_{i=0}^m \binom{m}{i}x^i$. Similarly, we have
\begin{equation}
	\frac{q^{00}_1}{q^0_1} = \frac{1+(1-2\delta_{p1})^n}{2}.
\end{equation}

From Theorem \ref{thm:kr_noOTP}, we have
\begin{equation}\label{eq:n1n_Rj}
\begin{split}
	R^0 &= nr^0 \\
		&= 1 - h(\frac{q^n}{(1-q)^n+q^n}) - \frac{(1-q)^n}{(1-q)^n+q^n}h(\frac{1+(1-2\delta_{p0})^n}{2}) - \frac{q^n}{(1-q)^n+q^n}h(\frac{1+(1-2\delta_{p1})^n}{2}) \\
	&\ge 1 - h(\frac{q^n}{(1-q)^n+q^n}) - h(\frac{1}{2} + \frac{(1-3q+2q_{11})^n + |q-2q_{11}|^n}{2((1-q)^n+q^n)}),
\end{split}
\end{equation}
where $h(x) = -x \log x - (1-x) \log (1-x)$. The last equation is from the concavity of $h$ and the fact that for the positive secret key, we must have $q < \frac{1}{3}$.

For the BB84 protocol, the lower bound of $R^0$ can be attained by setting the free parameter $q_{11}$ to zero, which minimizes the right-hand side of Eq.~\eqref{eq:n1n_Rj} from basic calculus. Thus,
\begin{equation}
\begin{split}
	R^0 &\ge 1 - h(\frac{q^n}{(1-q)^n+q^n}) - h(\frac{1}{2} + \frac{(1-3q)^n + q^n}{2((1-q)^n+q^n)})\\
	&= 1-h(\frac{x^n}{1+x^n}) - h(\frac{1}{2} + \frac{x^n + (1-2x)^n}{2(1+x^n)}),
\end{split}
\end{equation} 
where $x = \frac{q}{1-q}$. We have:
\begin{equation}
\begin{split}
	h(\frac{x^n}{1+x^n}) &= \log (1+x^n) - \frac{nx^n}{1+x^n} \log x \\
	&= \frac{1}{\ln 2}x^n - \frac{nx^n}{1+x^n} \log x + O(x^{2n}), \\
	h(\frac{1}{2} + \frac{x^n + (1-2x)^n}{2(1+x^n)}) &= 1- \frac{2}{\ln 2}(\frac{x^n+(1-2x)^n}{2(1+x^n)})^2 + O((1-2x)^{3n}).
\end{split}
\end{equation}
The first equation comes from direct calculation, and the second expands $h(x)$ at $x=\frac{1}{2}$. Thus,
\begin{equation}\label{eq:n1n_bound_rj}
	R^0 = \frac{2}{\ln 2}(\frac{x^n+(1-2x)^n}{2(x^n+1)})^2 - \frac{1}{\ln 2}x^n + \frac{nx^n}{1+x^n} \log x + O(x^{2n} + (1-2x)^{3n}).
\end{equation}

When $(1-2x)^2 > x$, the net gain of key $R^0$ becomes positive for sufficiently large $n$. This condition corresponds to $x < \frac{1}{4}$ or equivalently $q < \frac{1}{5}$. Therefore, the error rate threshold for the BB84 protocol in this scheme is $20\%$.

For the six-state protocol, $q_{11} = \frac{q}{2}$. From Eq.~\eqref{eq:n1n_Rj}, we have
\begin{equation}
\begin{split}
	R^0 &\ge 1 - h(\frac{q^n}{(1-q)^n+q^n}) - h(\frac{1}{2} + \frac{(1-2q)^n}{2((1-q)^n+q^n)}) \\
		&= 1 - h(\frac{x^n}{1+x^n}) - h(\frac{1}{2} + \frac{(1-x)^n}{2(1+x^n)}).
\end{split}
\end{equation}
And we expand $h$ at $\frac{1}{2}$:
\begin{equation}
	h(\frac{1}{2} + \frac{(1-x)^n}{2(1+x^n)}) = 1- \frac{2}{\ln 2}(\frac{(1-x)^n}{2(1+x^n)})^2 + O((1-x)^{3n}).
\end{equation}
Similarly as in Eq.~\eqref{eq:n1n_bound_rj}, when $(1-x)^2 > x$, $R^0 > 0$ for sufficiently large $n$. This condition corresponds to $x < \frac{3-\sqrt{5}}{2}$ or equivalently $q < \frac{3-\sqrt{5}}{5-\sqrt{5}} \approx 27.6\%$. Thus, the six-state protocol has an error rate threshold of $27.6 \%$ in this scheme.

\section{Key rate formula for adding noise} \label{app:add_noise}
Here, we derive the key rate formula for adding noise in the advantage distillation framework. After Alice and Bob identify the bit error syndrome $s^j$ with OTP encryption, we purify the noisy EPR pairs in this group as
\begin{equation}
	\sum_{i,j',i'}\sqrt{\frac{q^{jj'}_{ii'}}{q^j}} \mathbb{I} \otimes X_B^{e^j_i}Z_B^{e^{j'}_{i'}} \ket{\psi_{00}^{\otimes n}}_{AB}\ket{e^j_i e^{j'}_{i'}}_E,
\end{equation}
where the notation is given in Table \ref{tab:Notation}. As mentioned in Sec.~\ref{sec:one-way-sec}, Alice adds noise under the constraint that won't affect the bit error syndrome. Recall that $s^j = He^j_i$ and $HG = 0$. Thus, the bit error pattern after adding noise should be in the form of 
\begin{equation}
	\tilde{e}^j_i = e^j_i + Gf, f\in \{0,1\}^k.
\end{equation}
Alice independently samples each bit of $f$  with a probability $p$ of being 1. The probability of sampling each $f$ is $p_f = p^{|f|}(1-p)^{k-|f|}$. Alice applies a bit flip $X_i$ if $(Gf)_i = 1$. After adding noise, the state becomes:
\begin{equation}
\begin{split}
	&\sum_{j,j',i',f} \sqrt{\frac{q^{jj'}_{ii'}}{q^j}p_f} X_A^{Gf} \otimes X_B^{e^j_i}Z_B^{e^{j'}_{i'}} \ket{\psi_{00}^{\otimes n}}_{AB} \ket{f}_{A'}\ket{e^j_i e^{j'}_{i'}}_E \\
	=&\sum_{j,j',i',f}\sqrt{\frac{q^{jj'}_{ii'}}{q^j}p_f} \mathbb{I} \otimes X_B^{e^j_i}Z_B^{e^{j'}_{i'}}X_B^{Gf} \ket{\psi_{00}^{\otimes n}}_{AB} \ket{f}_{A'}\ket{e^j_i e^{j'}_{i'}}_E\\
	=&\sum_{j,j',i',f}\sqrt{\frac{q^{jj'}_{ii'}}{q^j}p_f} \mathbb{I} \otimes X_B^{e^j_i + Gf}Z_B^{e^{j'}_{i'}}\ket{\psi_{00}^{\otimes n}}_{AB} Z^{G^{T}e^{j'}_{i'}}\ket{f}_{A'}\ket{e^j_i e^{j'}_{i'}}_E\\
	=&\sum_{j,j',i',f}\sqrt{\frac{q^{jj'}_{ii'}}{q^j}p_f} \mathbb{I} \otimes X_B^{e^j_i + Gf}Z_B^{e^{j'}_{i'}}\ket{\psi_{00}^{\otimes n}}_{AB} Z^{s^{j'}}\ket{f}_{A'}\ket{e^j_i e^{j'}_{i'}}_E.\\
\end{split}
\end{equation}
The first equality comes from $X \otimes X \ket{\psi_{00}} = \ket{\psi_{00}}$. The second equality comes from commuting $Z_B^{e^{j'}_{i'}}$ and $X_B^{Gf}$ produce $(-1)^{(e^{j'}_{i'})^TGf}$, and this is equivalent to apply $Z^{G^T e^{j'}_{i'}}$ on $\ket{f}$. The third equality comes from $G^T e^{j'}_{i'} = s^{j'}$. 

The consumption of bit error correction will change due to the change of the distribution on error pattern $e^{j}_i$. Denote $j_f$ as the index such that $e^{j_f}_i = e^j_i + Gf$. The probability of getting $\tilde{e}^{j}_i$ is
\begin{equation}
	\tilde{q}^{j}_{i} = \sum_{f} p_f q^{j_f}_{i}.
\end{equation}
The consumption of bit error correction becomes
\begin{equation}
	\tilde{I}_e^j = h(\{\frac{\tilde{q}^{j}_i}{q^j}\}_i).
\end{equation}

Now, define $\ket{\psi} = \sqrt{(1-p)}\ket{0} + \sqrt{p} \ket{1}$ and $\ket{\psi^{j'}} = Z^{s^{j'}} \ket{\psi}^{\otimes n}$. After bit error correction, the state becomes
\begin{equation}
\begin{split}
	&\sum_{j,j',i',f}\sqrt{\frac{q^{jj'}_{ii'}}{q^j}p_f} \mathbb{I} \otimes Z_B^{e^{j'}_{i'}}\ket{\psi_{00}^{\otimes n}}_{AB} Z^{s^{j'}}\ket{f}_{A'}\ket{e^j_i + Gf}_{B'}\ket{e^j_i e^{j'}_{i'}}_E \\
	=&C_{A',B'}\sum_{j,j',i',f}\sqrt{\frac{q^{jj'}_{ii'}}{q^j}p_f} \mathbb{I} \otimes Z_B^{e^{j'}_{i'}}\ket{\psi_{00}^{\otimes n}}_{AB} Z^{s^{j'}}\ket{f}_{A'}\ket{e^j_i}_{B'}\ket{e^j_i e^{j'}_{i'}}_E \\
	=&C_{A',B'}\sum_{j,j',i'}\sqrt{\frac{q^{jj'}_{ii'}}{q^j}} \mathbb{I} \otimes Z_B^{e^{j'}_{i'}}\ket{\psi_{00}^{\otimes n}}_{AB} Z^{s^{j'}}\ket{\psi}_{A'}^{\otimes n}\ket{e^j_i}_{B'}\ket{e^j_i e^{j'}_{i'}}_E \\
	=& C_{A',B'}\sum_{j,j',i'}\sqrt\frac{q^{jj'}_{ii'}}{q^j} Z_A^{e^{j'}_{i'}} \otimes  \mathbb{I}\ket{\psi_{00}^{\otimes n}}_{AB} \ket{\psi^{j'}}_{A'}\ket{e^j_i}_{B'}\ket{e^j_i e^{j'}_{i'}}_E,
\end{split}
\end{equation}
in which $C_{A'B'}\ket{f}_{A'}\ket{e^j_i}_{B'} = \ket{f}_{A'}\ket{e^j_i+Gf}_{B'}$. 

Here comes the key observation of adding noise: If Alice corrects $Z^{e^{j'}_{i'}}$, then the state on subsystem $ABA'B'$ takes the form
\begin{equation}
	C_{A'B'} [\psi_{00}^{\otimes n} \otimes \rho_{A'B'}] C_{A'B'}^{T},
\end{equation}
which is a private state that can generate secure keys \cite{PhysRevLett.94.160502}. Alice holds $\ket{\psi^{j'}}$ which contains information of $s^{j'}$. Thus, she can save on the consumption of phase error correction. If this saving surpasses the consumption of the increment in bit error correction, then adding noise can increase the key rate \cite{PhysRevLett.98.020502}.

Now, we consider the consumption of phase error correction. After identifying $e^j_i$, we must identify phase error pattern $e^{j'}_{i'}$. In fact, with the help of $\ket{\psi^{j'}}$, the saving is $h(\sigma^j_i)$ \cite{PhysRevA.54.1869, PhysRevLett.98.020502}, in which
\begin{equation}
\begin{split}
	\sigma^j_i = \sum_{j'} \frac{q^{jj'}_i}{q^{j}_i} \psi^{j'}.
\end{split}
\end{equation} 

Thus, the phase error consumption is 
\begin{equation}
	I_{p,i}^{j} = h(\{\frac{q^{j,j'}_{i,i'}}{q^j_i}\}_{j',i'}) - h(\sigma^j_i).
\end{equation}

According to Eq.~\eqref{eq:ph_consumption_noOTP}, if OTP is not performed in the two-way communication step, the consumption is
\begin{equation}
	\tilde{I}_{p,i}^{j} = n-k + h(\{\frac{q^{j,j'}_{i}}{q^j_i}\}_{j'}) - h(\sigma^j_i).
\end{equation}

Thus, the key rate formula for the groups with bit error syndrome $s^j$ is
\begin{equation}
\begin{split}
	r^j &= \max(1 - \frac{1}{n}h(\{\frac{\tilde{q}^j_i}{q^j}\}_i) -  \frac{1}{n}\sum_{i} \frac{q^j_i}{q^j}[h(\{\frac{q^{j,j'}_{i,i'}}{q^j_i}\}_{j',i'}) - h(\sigma^j_i)], 0) - \frac{n-k}{n}\quad \text{(with OTP)}, \\
	r^j &= \max(\frac{k}{n} - \frac{1}{n}h(\{\frac{\tilde{q}^j_i}{q^j}_i\}) -  \frac{1}{n}\sum_{i} \frac{q^j_i}{q^j}[h(\{\frac{q^{j,j'}_{i}}{q^j_i}\}_{j'}) - h(\sigma^j_i)],0)\quad \text{(without OTP)}.
\end{split}
\end{equation}
Note that the key rate of the scheme without OTP is always higher than with OTP. The final key rate is
\begin{equation}
	r = \sum_j q^j r^j.
\end{equation}

\bibliography{bibAKD.bib}

\end{document}